\newtheorem{Theorem}{Theorem}
\newtheorem{Proposition}{Proposition}
\newtheorem{Lemma}{Lemma}
\newtheorem{Remark}{Remark}
\theoremstyle{definition}
\newcommand{\dt}{{\mathrm{d}}t}
\newcommand{\dx}{{\mathrm{d}}x}
\newcommand{\dy}{{\mathrm{d}}y}
\newcommand{\Id}{{\mathbf{1}}}
\newcommand{\dom}{{\mathrm{dom}~}}
\begin{document}

	\title{Spectrum of the Laplacian in waveguide shaped surfaces}

\author{Diana C. S. Bello \footnote{dianasuarez@estudante.ufscar.br} \\
{\small Departamento de Matemática – UFSCar, São Carlos, SP, 13565-905, Brazil}}

	\date{\today}
	
	\maketitle 
	
	\begin{abstract}

	\end{abstract}

Let $-\Delta_{\cal S}$ be the Laplace operator in ${\cal S} \subset \mathbb{R}^3$ on a waveguide shaped surfaces, i.e., ${\cal S}$ is built by translating a closed curve in a constant direction along an
unbounded spatial curve. Under the condition that the tangent vector
of the reference curve admits a finite limit at infinity, we find the essential spectrum of $-\Delta_{\cal S}$ and discuss conditions under which discrete eigenvalues emerge. Furthermore, we analyze the Laplacian in the case of a broken sheared waveguide shaped surface.
	
	\
	
	\noindent {\bf Mathematics Subject Classification (2020).} Primary: 49R05, 58J50;
	Secondary: 47A75, 47F05.
	
	\
	
	\noindent    {\bf Keywords:} Surface, Laplacian, Essential spectrum, Discrete spectrum.
	\
	
	
	%

	\section{Introduction}

The spectral properties of the Laplace operator on unbounded domains 
$\Omega \subset \mathbb{R}^n$ have received considerable attention in recent years. In these domains, the existence of discrete eigenvalues is a non-trivial problem, influenced by both the geometric characteristics of the domain and the type of boundary conditions imposed, typically Dirichlet, Neumann, or mixed Dirichlet–Neumann conditions \cite{avishai, belloverri, bori2, bori1, briet, bulla, carini, duclosfk, dittrichakriz2, duclos, exnerseba, solomyak0, gold, daviddn, davidkriz, davidlu, davidmainpaper, olendski, renger, verri}. In these works the authors investigated the spectrum of the Laplace operator inside unbounded domains, such as strips or waveguides. Nevertheless, a natural question arises when one considers an alternative perspective: what spectral properties emerge if the waveguide surface itself is taken as the domain of analysis, without imposing conventional boundary conditions? In this scenario, the spectral behaviour is no longer governed primarily by external constraints, but rather by the intrinsic geometry of the surface. Understanding how this intrinsic geometry influences the spectral properties of the associated Laplace operator opens new avenues in spectral geometry and mathematical physics.

The main motivation of the present work is to explore  the spectrum of the Laplace operator  on waveguide-shaped surfaces without imposing boundary conditions. Specifically, inspired by the results in \cite{ohv, verri}, we demonstrate that the essential spectrum of the operator is determined by the asymptotic behaviour of the surface at infinity. Moreover, we establish sufficient conditions under which discrete spectrum arises.

Let $\{e_1, e_2, e_3\}$ the canonical basis of $\mathbb{R}$., and  $\xi: [0,1] \longrightarrow \mathbb{R}^2$ be a simple closed curve $C^2$ in the $\{e_2, e_3\}$ plane, parametrized by arc length $t$ (i.e., $\|\xi'(t)\|=1$, $\forall t \in [0,1]$) with $\xi(t)=  (\xi_1(t), \xi_2(t))$. Suppose that $f,g:\mathbb{R} \longrightarrow \mathbb{R}$ are locally Lipschitz continuous functions, differentiable almost everywhere, with derivatives $f', g' \in L^\infty(\mathbb{R})$. Next, we consider the spatial curve 
\begin{equation*}
	r(x)= (x, f(x), g(x)), \quad x \in \mathbb{R}.
	\end{equation*}
Define the mapping
\begin{equation*}
		\begin{array}{lcll}
			\mathcal{P} : & \mathbb{R} \times {\cal C}  & \longrightarrow & \mathbb{R}^3\\
			& (x,t)          & \longmapsto     & r(x) + \xi_1(t) e_2 +\xi_2(t) e_3 ,
		\end{array}
	\end{equation*}
where  ${\cal C}$ denotes the circle with length equal to one,  and the surface given by
\begin{equation*}
		{\cal S} := \mathcal{P} (\mathbb{R} \times {\cal C}).
	\end{equation*} 
such that, ${\cal S}$ is obtained by translating the curve $\xi$ along $r(x)$, as illustrated in Figure \ref{fig_surf}.  

	Let  $-\Delta_{\cal S}$ the Laplace operator in ${\cal S}$, i.e., , the self-adjoint operator associated with the quadratic form	%
	\begin{equation*}
Q_{\cal S} (\psi) 
		= \int_{\cal S} |\nabla \psi|^2 {\rm d}{\bf x}, \quad \dom Q_{\cal S}
		= {\cal H}^1({\cal S}),
	\end{equation*}
	where ${\bf x} = (x, y,z)$ denotes a point of ${\cal S}$ and $\nabla \psi$ denotes the gradient vector of  $\psi$. The objective of this paper is to study the spectral problem of $-\Delta_{\cal S}$. Motivated by \cite{verri}, we assume that  
\begin{equation}\label{eq:condlim_surf}
    \lim_{|x| \to \infty} f'(x) =: \beta_1, \quad  \lim_{|x| \to \infty} g'(x) =: \beta_2, \quad \beta_1, \beta_2 \in \mathbb{R}.
\end{equation}

 \begin{figure}[ht]
    \centering
    \hspace{0.5cm}
    \begin{tabular}{cc}
    \adjustbox{valign=m}{\subfloat[Surface ${\cal S}$.\label{subfig-1:dummy}]{%
          \includegraphics[width=0.4\textwidth]{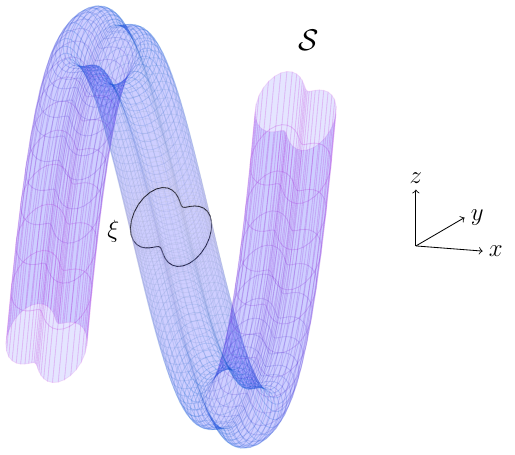}}}
    &      
    \adjustbox{valign=m}{\begin{tabular}{@{}c@{}}
    \subfloat[$xy$-plane view. \label{subfig-2:dummy}]{%
          \hspace{1cm}\includegraphics[width=0.07\textwidth]{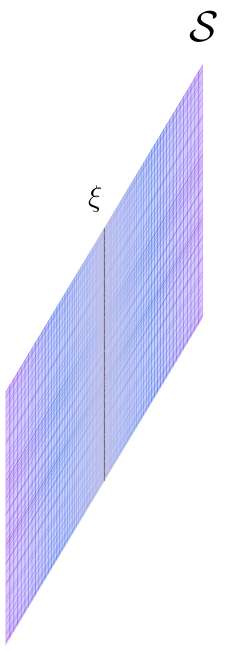} 
          \hspace{1cm}}  
          \vspace{0.5cm}
          \\ 
    \subfloat[$xz$-plane view.\label{subfig-3:dummy}]{%
          \hspace{1cm}\includegraphics[width=0.07\textwidth]{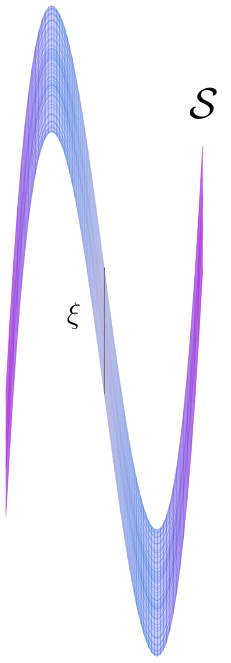} \hspace{1cm}} 
    \end{tabular}}
    \end{tabular}
    \caption{Waveguide shaped surface.}\label{fig_surf}
  \end{figure}

Our first objective is to characterize the essential spectrum of $-\Delta_{\cal S}$. For this purpose, we consider the two-dimensional operator
\begin{equation}\label{fourpp:surf_surf}
    		T_{\beta_1, \beta_2}:=- \frac{\partial}{\partial t} \left[ \left(\frac{1+\beta_1^2 + \beta_2^2}{h_{\beta_1,\beta_2}^2}\right) \partial_t \right] + (1+\beta_1^2 + \beta_2^2) \left[\left(\frac{(h_{\beta_1,\beta_2}^2)'}{4 h_{\beta_1,\beta_2}^4} \right)' + \left( \frac{(h_{\beta_1,\beta_2}^2)'}{4 h_{\beta_1,\beta_2}^3}\right)^2\right] ,
	\end{equation}
 \begin{equation*}
     \dom T_{\beta_1,\beta_2} :=\{v \in {\cal H}^1({\cal C}):  T_{\beta_1,\beta_2} v \in L^2({\cal C}), \; v'(0)=v'(1)\},
 \end{equation*}
 where $h_{\beta_1,\beta_2}^2 := 1+(\beta_1 \xi'_2 - \beta_2 \xi'_1)^2$. Denote by $E_1(0)$ the first eigenvalue of $T_{\beta_1,\beta_2}$ and by $\chi$ its corresponding normalized eigenfunction. Since $T_{\beta_1,\beta_2}$ is an elliptic operator with real coefficients, $E_1(0)$ is simple.  
 
\vspace{0.2cm}	
\begin{Theorem}\label{theo:surfess_surf}
	Suppose that the conditions in \eqref{eq:condlim_surf} hold. Then, $$\sigma_{ess}(-\Delta_{\cal S}) = [E_1(0), \infty).$$
	\end{Theorem}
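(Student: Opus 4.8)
The plan is to transport the spectral problem to the fixed cylinder $\R\times{\cal C}$ through the parametrization $\mathcal{P}$. First I would compute the metric induced by $\mathcal{P}$, namely $g_{xx}=1+(f')^2+(g')^2$, $g_{tt}=1$, $g_{xt}=f'\xi_1'+g'\xi_2'$, together with $\det g=1+(f'\xi_2'-g'\xi_1')^2$; this last quantity is precisely the function whose limit at infinity equals $h_{\beta_1,\beta_2}^2$. Writing the Dirichlet form in these coordinates gives $Q_{\cal S}(\psi)=\int_{\R\times{\cal C}}(g^{xx}|\partial_x\psi|^2+2g^{xt}\Re(\overline{\partial_x\psi}\,\partial_t\psi)+g^{tt}|\partial_t\psi|^2)\sqrt{g}\,\dx\,\dt$ on the weighted space $L^2(\sqrt g\,\dx\,\dt)$, a unitary copy of $L^2({\cal S})$. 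A point worth stressing is that the coefficients $g^{ij}\sqrt g$ depend only on $f'$ and $g'$, so no second derivatives of $f,g$ are required — which matters since the hypotheses only grant $f',g'\in L^\infty$. By \eqref{eq:condlim_surf} these coefficients converge, as $|x|\to\infty$, to the $x$-independent limits obtained on replacing $f',g'$ by $\beta_1,\beta_2$.

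Next I would introduce the limiting operator $H_\infty$, defined by the same form with these frozen coefficients. Being translation invariant in $x$, it decomposes as a direct integral $H_\infty=\int^\oplus H_\infty(\kappa)\,\mathrm{d}\kappa$ under the partial Fourier transform $\partial_x\mapsto i\kappa$. The central computation is to show that each fibre is unitarily equivalent to $T_{\beta_1,\beta_2}+\kappa^2/(1+\beta_1^2+\beta_2^2)$. After flattening in the transverse variable alone (legitimate because $\xi\in C^2$), the fibre form consists of a longitudinal term $\kappa^2/\det g$, the transverse kinetic term and potential defining $T_{\beta_1,\beta_2}$, and a cross term linear in $\kappa$ of first-order (magnetic) type, proportional to $(\beta_1\xi_1'+\beta_2\xi_2')\,\kappa\,\Im(\bar v\,\partial_t v)/\det g$. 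Completing the square absorbs the cross term into a covariant transverse derivative, and, using the identity $\det g=(1+\beta_1^2+\beta_2^2)-(\beta_1\xi_1'+\beta_2\xi_2')^2$, the residual $\kappa^2$-coefficient reduces to the constant $1/(1+\beta_1^2+\beta_2^2)$. The leftover magnetic term is then gauged away by $v\mapsto e^{i\kappa\Theta}v$; this gauge is single-valued on ${\cal C}$ precisely because $\xi$ is a \emph{closed} curve, so the flux $\oint(\beta_1\xi_1'+\beta_2\xi_2')\,\dt=\beta_1(\xi_1(1)-\xi_1(0))+\beta_2(\xi_2(1)-\xi_2(0))$ vanishes. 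Hence $E_n(\kappa)=E_n(0)+\kappa^2/(1+\beta_1^2+\beta_2^2)$; the first band attains its minimum $E_1(0)$ at $\kappa=0$ and tends to infinity, so that $\sigma(H_\infty)=\sigma_{ess}(H_\infty)=[E_1(0),\infty)$.

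It then remains to prove $\sigma_{ess}(-\Delta_{\cal S})=\sigma_{ess}(H_\infty)=[E_1(0),\infty)$. For the inclusion $[E_1(0),\infty)\subseteq\sigma_{ess}(-\Delta_{\cal S})$ I would build singular Weyl sequences localized at infinity: given $\lambda=E_1(0)+\kappa^2/(1+\beta_1^2+\beta_2^2)$, the corresponding generalized eigenfunction of $H_\infty$ (built from $\chi$ and a longitudinal plane wave) is multiplied by slowly spreading cut-offs $\varphi_n(x)=n^{-1/2}\varphi((x-x_n)/n)$ with $x_n\to\infty$. Since on the support of $\varphi_n$ the coefficients of $Q_{\cal S}$ differ from the limiting ones by $o(1)$ while the cut-off derivatives are $O(1/n)$, the normalized sequence satisfies $\psi_n\rightharpoonup 0$ and $\|(-\Delta_{\cal S}-\lambda)\psi_n\|\to0$, all estimates being read off the form so that no second derivatives of $f,g$ intervene. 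For the reverse inclusion I would use a Neumann bracketing / IMS localization in $x$: the region $|x|\le R$ lies over a compact manifold and contributes only discrete spectrum, while for $|x|\ge R$ one has $Q_{\cal S}(\psi)\ge(E_1(0)-\epsilon_R)\|\psi\|^2$ with $\epsilon_R\to0$, because there the coefficients are uniformly close to those of $H_\infty\ge E_1(0)$. This gives $\inf\sigma_{ess}(-\Delta_{\cal S})\ge E_1(0)$ and, combined with the first inclusion, the claimed equality.

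I expect the main difficulty to lie in this decoupling step under the weak regularity assumed: since $f,g$ need not be twice differentiable, one cannot reduce $-\Delta_{\cal S}$ to a Schrödinger operator with an honest potential in $x$ and invoke relative compactness; all the convergence and smallness estimates must instead be carried out at the level of the sesquilinear form, using only $f',g'\to\beta_1,\beta_2$ and the uniform bound $\det g\ge1$. The gauge/zero-flux computation that pins the band minimum to $E_1(0)$ is the other delicate point, and it is exactly there that the closedness of $\xi$ is used.
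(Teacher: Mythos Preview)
Your proposal is correct and largely parallels the paper: the change of coordinates, the observation that the coefficients depend only on $f',g'$, the direct-integral analysis of the limiting operator, the gauge computation (including the zero-flux argument from the closedness of $\xi$), and the identification $E_n(\kappa)=E_n(0)+\kappa^2/(1+\beta_1^2+\beta_2^2)$ all match the paper's Section~\ref{sec2} and Proposition~\ref{prop:essfg2_surf}.

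The genuine difference is in the decoupling step. The paper does \emph{not} use Neumann bracketing or IMS for the inclusion $\sigma_{ess}(-\Delta_{\cal S})\subseteq[E_1(0),\infty)$. Instead it proves a form-level Weyl criterion (Lemma~\ref{lemma:tiafg_surf}) showing that any singular sequence for $\hat H_{f',g'}$ can be replaced by one supported in $\{|x|>n\}$, and then uses this \emph{symmetrically} (Proposition~\ref{prop:essfg_surf}) to obtain the equality $\sigma_{ess}(\hat H_{f',g'})=\sigma_{ess}(\hat H_{\beta_1,\beta_2})$ of the full essential spectra as sets, by comparing the two sesquilinear forms on functions supported at infinity. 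Your Neumann-bracketing route is also valid and arguably more elementary; it gives $\inf\sigma_{ess}\ge E_1(0)$ directly, which together with your Weyl sequences yields the same conclusion. The paper's approach buys a cleaner, symmetric statement (equality of essential spectra with the frozen-coefficient model) that is reusable, while yours avoids the somewhat technical construction of the localized singular sequence in Lemma~\ref{lemma:tiafg_surf}. One small point: when you write $\|(-\Delta_{\cal S}-\lambda)\psi_n\|\to0$, under your regularity hypotheses this should be understood in the dual norm $\|\cdot\|_{-1}$, as the paper makes explicit; you acknowledge this implicitly when you say the estimates are ``read off the form''.
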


The proof of Theorem \ref{theo:surfess_surf} is presented in Section \ref{sec3}.

 The next step is to analyze the existence of discrete eigenvalues for  $-\Delta_{\cal S}$. Define, for all $x \in \mathbb{R}$,  the functions  
 \begin{equation*}
   A(x) := \int_{\cal C} \frac{1}{h^2} |\chi^2| \dt, \quad
   B(x):= \int_{\cal C}    \left[(f' \xi'_1+ g' \xi'_2) \frac{ \partial_t h^2}{2 h^4} +\frac{\partial}{\partial t}\left(\frac{f' \xi'_1+ g' \xi'_2}{h^2}\right)  -  \frac{\partial_x h^2}{2h^4}
   \right] |\chi|^2 \dt,
 \end{equation*}
 \begin{equation*}
     C(x):= \int_{\cal C} \left[\left (\frac{\partial_x h^2}{4h^3} \right)^2   - \frac{\partial}{\partial t}  \left( (f' \xi'_1+ g' \xi'_2) \frac{ \partial_x h^2}{4 h^4} \right)  -  (f' \xi'_1+ g' \xi'_2)\frac{ \partial_x h^2 \partial_t h^2}{8 h^6} \right] |\chi|^2 \dt,
 \end{equation*}
 \begin{equation*}
     D(x):= \int_{\cal C} \left(\frac{1}{h^2} |\chi'|^2 + \left[
  \frac{\partial}{\partial t} \left(\frac{\partial_t h^2}{4 h^4} \right) + 
 \left(\frac{ \partial_t h^2}{4 h^3}\right)^2 \right] |\chi|^2 \right)  \dt,
 \end{equation*}
 the constant 
 \begin{equation*}
     E:= \int_{\cal C} \left( \frac{1}{h_{\beta_1, \beta_2}^2} |\chi'|^2 + \left[ \left(\frac{(h_{\beta_1, \beta_2}^2)'}{4 h_{\beta_1, \beta_2}^4} \right)' +  \left(\frac{ (h_{\beta_1, \beta_2}^2)'}{4 h_{\beta_1, \beta_2}^3}\right)^2 \right] |\chi|^2 \right) \dt,
 \end{equation*}
and the function 
\begin{equation*}
  V(x):=  C(x) + (1+(f')^2 + (g')^2) D(x) -  
 (1+\beta_1^2 + \beta_2^2)  E, \quad x \in \mathbb{R}.  
\end{equation*}

\begin{Theorem}\label{exidisspc_surf}
	Suppose that the conditions in \eqref{eq:condlim_surf} hold, $V(x) \in L^1(\mathbb{R})$ and $\int_\mathbb{R} V(x)<0$. Then,  $$\inf \sigma (-\Delta_{\cal S}) < E_1(0),$$
 i.e., 
	$\sigma_{dis}(-\Delta_{\cal S}) \neq \emptyset$. 
\end{Theorem}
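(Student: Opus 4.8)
The plan is to combine the variational characterization of $\inf \sigma(-\Delta_{\cal S})$ with Theorem \ref{theo:surfess_surf}. Since the latter gives $\sigma_{ess}(-\Delta_{\cal S}) = [E_1(0),\infty)$, it suffices to exhibit a single $\psi \in {\cal H}^1({\cal S})$ with $Q_{\cal S}(\psi) < E_1(0)\,\|\psi\|^2$; such an inequality forces $\inf \sigma(-\Delta_{\cal S}) < E_1(0) = \inf \sigma_{ess}(-\Delta_{\cal S})$, so at least one spectral point lies strictly below the essential spectrum and is therefore a discrete eigenvalue.

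First I would rewrite $Q_{\cal S}$ in the global coordinates $(x,t) \in \mathbb{R}\times{\cal C}$ induced by ${\cal P}$. A direct computation gives the metric coefficients $g_{xx}=1+(f')^2+(g')^2$, $g_{xt}=f'\xi'_1+g'\xi'_2$, $g_{tt}=1$, whose determinant is exactly $h^2=1+(f'\xi'_2-g'\xi'_1)^2$. Passing to the flat measure through the unitary map $\psi \mapsto h^{-1/2}\psi$ and choosing the trial function $\psi = h^{-1/2}\,\phi(x)\,\chi(t)$, with $\chi$ the normalized ground state of $T_{\beta_1,\beta_2}$ and $\phi$ a real function to be fixed, I would expand the three quadratic terms $g^{xx}|\partial_x\psi|^2$, $2g^{xt}\,\mathrm{Re}(\overline{\partial_x\psi}\,\partial_t\psi)$ and $g^{tt}|\partial_t\psi|^2$. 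Integrating over ${\cal C}$, integrating by parts in $t$ (legitimate thanks to the periodicity built into $\dom T_{\beta_1,\beta_2}$), and using the identity $(1+\beta_1^2+\beta_2^2)E=E_1(0)$ (the Rayleigh quotient of $\chi$ for $T_{\beta_1,\beta_2}$), the cross-derivative contributions collapse into $B$, the purely transverse terms into $(1+(f')^2+(g')^2)D$, the bending terms into $C$, and the normalization shift into $-E_1(0)$. The outcome is the one-dimensional identity
\begin{equation*}
Q_{\cal S}(\psi) - E_1(0)\,\|\psi\|^2 = \int_{\mathbb{R}} A(x)\,\phi'(x)^2\,\dx + \int_{\mathbb{R}} B(x)\,\phi'(x)\phi(x)\,\dx + \int_{\mathbb{R}} V(x)\,\phi(x)^2\,\dx .
\end{equation*}

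With this reduction in hand I would run the standard squeezing argument. Fix a smooth even $\varphi$ with $\varphi(0)=1$ decaying at infinity, and set $\phi_n(x)=\varphi(x/n)$. Since $h^2\ge 1$ one has $A\le 1$, so after the substitution $y=x/n$ the first term is $\tfrac1n\int A(ny)\varphi'(y)^2\,\dy = O(1/n)\to 0$. The cross term becomes $\int_{\mathbb{R}} B(ny)\,\varphi'(y)\varphi(y)\,\dy$; because $B$ is bounded and admits one and the same finite limit $B_\infty$ as $x\to\pm\infty$ (conditions \eqref{eq:condlim_surf} make $f'\to\beta_1$, $g'\to\beta_2$ and $\partial_x h^2\to 0$ symmetrically at both ends), dominated convergence sends it to $B_\infty\int_{\mathbb{R}}\varphi'\varphi\,\dy=\tfrac12 B_\infty[\varphi^2]_{-\infty}^{\infty}=0$. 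Finally, since $V\in L^1(\mathbb{R})$ and $\varphi(x/n)^2\to 1$ pointwise and boundedly, dominated convergence gives $\int_{\mathbb{R}} V(x)\varphi(x/n)^2\,\dx\to\int_{\mathbb{R}} V(x)\,\dx<0$. Hence $Q_{\cal S}(\psi_n)-E_1(0)\|\psi_n\|^2\to\int_{\mathbb{R}} V<0$, so the shifted form is strictly negative for $n$ large, which is exactly what is needed.

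The main obstacle is the bookkeeping in the coordinate computation: one must check that every term produced by expanding $|\partial_x\psi|^2$, the mixed term, and $|\partial_t\psi|^2$ after the $h^{-1/2}$-conjugation regroups precisely into $A,B,C,D$, with the $t$-integrations by parts producing exactly the $\partial_t(\cdots)$ terms appearing in $B$, $C$ and $D$. A secondary technical point is the regularity required to make $B,C,V$ well defined and bounded: these involve $\partial_x h^2$, hence second derivatives of $f,g$, so I would either impose enough smoothness or interpret the expressions in the a.e./weak sense compatible with $f',g'\in L^\infty(\mathbb{R})$, and then separately confirm that $B$ is bounded with coinciding limits at $\pm\infty$, which is the only property of $B$ used in the limiting argument.
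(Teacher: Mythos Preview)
Your proposal is correct and follows essentially the paper's proof: the same unitary change to $(x,t)$-coordinates (this is Section~\ref{sec2}, so you need not rederive it), the same product trial functions $\psi_n=w_n(x)\chi(t)$ with $w_n(x)=w(x/n)$, the same one-dimensional reduction
\[
\hat q_{f',g'}(\psi_n)=\int_{\mathbb R} A\,|w_n'|^2\,\dx+\int_{\mathbb R} B\,w_nw_n'\,\dx+\int_{\mathbb R} V\,|w_n|^2\,\dx,
\]
and the same dominated-convergence argument on the $V$-term. The one visible difference is the handling of the $B$-cross term: the paper bounds it by Cauchy--Schwarz as $\big(\int|B|^2|w_n|^2\big)^{1/2}\big(\int|w_n'|^2\big)^{1/2}$, while you rescale and pass to the limit $\int B(ny)\varphi(y)\varphi'(y)\,dy\to B_\infty\int\varphi\varphi'=0$. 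One caution: your claim that $\partial_x h^2\to 0$ at both ends does \emph{not} follow from \eqref{eq:condlim_surf} alone, since $\partial_x h^2=2(f'\xi_2'-g'\xi_1')(f''\xi_2'-g''\xi_1')$ depends on $f'',g''$; this is precisely the extra regularity you flag in your final paragraph, and it should be stated as an assumption rather than derived from \eqref{eq:condlim_surf}.
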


The proof of this result is given in Section \ref{sec4}. Note that the condition $\int_{\mathbb{R}} V(x) \dx<0$ implies the existence of discrete eigenvalues for $-\Delta_{\cal S}$. However, it is not a necessary condition for this happen. For instance,

\begin{Theorem}\label{exidisspc_surf2}
    Suppose that the conditions in  \eqref{eq:condlim_surf} hold, $V(x), B(x) \in L^1(\mathbb{R})$,  $\int_\mathbb{R} V(x)=0$ and $B(x)$ is not constant. Then,  $$\inf \sigma (-\Delta_{\cal S}) < E_1(0),$$
 i.e., 
	$\sigma_{dis}(-\Delta_{\cal S}) \neq \emptyset$. 
\end{Theorem}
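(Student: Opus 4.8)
The plan is to prove $\inf\sigma(-\Delta_{\cal S}) < E_1(0)$ by producing a trial function whose Rayleigh quotient lies strictly below $E_1(0)$; since $\sigma_{ess}(-\Delta_{\cal S}) = [E_1(0),\infty)$ by Theorem \ref{theo:surfess_surf}, any spectrum below $E_1(0)$ is automatically discrete. First I would pull the form $Q_{\cal S}$ back through the parametrization $\cal P$ to the cylinder $\mathbb{R}\times{\cal C}$ and work with the separated ansatz $\psi(x,t)=\varphi(x)\chi(t)$, where $\chi$ is the normalized ground state of the transverse limit operator $T_{\beta_1,\beta_2}$. A direct computation—the same one underlying Theorem \ref{exidisspc_surf}, and using the identity $(1+\beta_1^2+\beta_2^2)E = E_1(0)$—reduces $Q_{\cal S}(\psi) - E_1(0)\|\psi\|^2$ to the one-dimensional functional
\begin{equation*}
\mathcal{E}[\varphi] = \int_{\mathbb{R}} A(x)\,|\varphi'(x)|^2\,\dx + 2\int_{\mathbb{R}} B(x)\,\Ree\big(\overline{\varphi(x)}\,\varphi'(x)\big)\,\dx + \int_{\mathbb{R}} V(x)\,|\varphi(x)|^2\,\dx ,
\end{equation*}
in which $A$ is the positive kinetic weight, $B$ the first-order coupling, and $V$ the effective potential defined above. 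It then suffices to exhibit a real $\varphi$ with $\mathcal{E}[\varphi] < 0$.

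A constant profile already encodes the hypothesis $\int_{\mathbb{R}} V=0$. For a cutoff $\phi_n$ equal to $1$ on $[-n,n]$ and decaying outside (say $\phi_n(x)=\phi(x/n)$), both the kinetic term $\int A|\phi_n'|^2$ and the total-derivative term $2\int B\,\phi_n\phi_n' = \int B\,(\phi_n^2)'$ tend to $0$ as $n\to\infty$, using $A,B\in L^1$-type bounds exactly as in Theorem \ref{exidisspc_surf}, while $\int V\phi_n^2 \to \int_{\mathbb{R}} V = 0$. Hence $\mathcal{E}[\phi_n]\to 0$, so—unlike in Theorem \ref{exidisspc_surf}, where $\int V<0$ closes the argument here—the leading order is merely marginal and a finer trial function is needed.

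The refinement is a first-order perturbation of the constant. I would fix $w\in C_c^\infty(\mathbb{R})$ with $\mathrm{supp}\,w\subset[-n_0,n_0]$ and, for $n>n_0$, set $\varphi_{n,\lambda}=\phi_n+\lambda w$. Expanding and using $\phi_n\equiv 1$, $\phi_n'\equiv 0$ on $\mathrm{supp}\,w$ gives
\begin{equation*}
\mathcal{E}[\varphi_{n,\lambda}] = \mathcal{E}[\phi_n] + 2\lambda\,\ell(w) + \lambda^2\,\mathcal{E}[w], \qquad \ell(w) := \int_{\mathbb{R}} V\,w\,\dx + \int_{\mathbb{R}} B\,w'\,\dx .
\end{equation*}
If $\ell(w)\neq 0$ for some such $w$, I would choose the sign of $\lambda$ so that $2\lambda\,\ell(w)<0$, then take $|\lambda|$ small enough that $\lambda^2|\mathcal{E}[w]|<|\lambda\,\ell(w)|$ and $n$ large enough that $|\mathcal{E}[\phi_n]|<|\lambda\,\ell(w)|$, yielding $\mathcal{E}[\varphi_{n,\lambda}]<0$ and hence a discrete eigenvalue. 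This bookkeeping of the error terms against the fixed first-order gain is routine once the integrability of $A,B,V$ is in hand.

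The crux—and the only place the hypothesis that $B$ is not constant enters—is the nonvanishing of the linear functional $\ell$. Integrating the second term by parts (the boundary terms vanish as $w$ is compactly supported) rewrites $\ell(w)=\int_{\mathbb{R}}(V-B')\,w\,\dx$, so $\ell\equiv 0$ would force $V=B'$ distributionally. The mechanism behind the hypothesis is that a constant $B$ contributes nothing to the real form $\mathcal{E}$—its first-order term is an exact total derivative with vanishing boundary values—so only the variation $B'$ is spectrally active; combined with $\int_{\mathbb{R}} V=0$ (which, once $B\in L^1$ with limits forces $B(\pm\infty)=0$, gives $\int_{\mathbb{R}}(V-B')=0$), the assumption that $B$ is not constant is precisely what keeps the effective potential $V-B'$ from vanishing identically. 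Establishing $V-B'\not\equiv 0$ rigorously—rather than merely $B'\not\equiv 0$—is the delicate point I expect to be the main obstacle; with it secured, one takes $w$ a smooth compactly supported approximation of $V-B'$, so that $\ell(w)\approx\int_{\mathbb{R}}(V-B')^2>0$, and the argument of the previous paragraph produces the claimed bound state.
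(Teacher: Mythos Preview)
Your strategy is exactly the paper's: reduce to the one–dimensional form, note that the cut-off constant $\phi_n$ gives $\mathcal E[\phi_n]\to 0$, then perturb by a compactly supported bump $w$ (the paper writes $\psi_{n,\varepsilon}=\psi_n+\varepsilon\phi$ with $\phi=\eta\chi$) and show the first-order cross term is nonzero for some choice of $w$, after which the sign and size of the perturbation parameter finish the job. One minor discrepancy: the reduction in the paper gives the first-order term as $\int_{\mathbb R}B\,\varphi\varphi'$, not $2\int_{\mathbb R}B\,\varphi\varphi'$; correspondingly the paper's cross term limits to $\int_{\mathbb R}\eta\bigl(V-\tfrac12 B'\bigr)\,\dx$ rather than your $\int_{\mathbb R} w\,(V-B')\,\dx$.

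The step you single out as the main obstacle is precisely where the paper's argument is most compressed. From ``$\ell(w)=0$ for all $w$'' the paper deduces $V-B'/2=0$ a.e.\ and then writes ``which implies $B(x)=2\int_{\mathbb R}V(x)\,\dx+C=C$'', invoking $\int_{\mathbb R}V=0$ to conclude $B$ is constant, contradicting the hypothesis. Your instinct that this implication is delicate is well founded: $B'=2V$ only yields $B(x)=B(a)+2\int_a^x V$, and the global condition $\int_{\mathbb R}V=0$ by itself does not force this primitive to be constant. The paper does not appeal to any further structural relation between $V$ and $B$ (both are built from $f',g',\xi,\chi$) to rule out $V\equiv B'/2$ with $B$ nonconstant, so the concern you raise is not resolved there either; you should regard the paper's proof as taking exactly your route and asserting the obstacle away in one line.
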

The proof of this result is also presented in Section \ref{sec4}.

Now, motivate by \cite{belloverri}, we consider a ``broken sheared waveguide shaped surfaces'', in which $f(x)$ is a null function and $g(x)=\beta|x|$, for all $x \in \mathbb{R}$. Observe that in this case, we are not under the conditions of the Theorems \ref{exidisspc_surf} and \ref{exidisspc_surf2}. Thus, take $\beta \in (0, \infty)$, consider the spatial curve
\begin{equation*}
		r_\beta(x)= (x, 0, g(x)), \quad x \in \mathbb{R}.
	\end{equation*}
 Define the mapping
\begin{equation}\label{lmas_surfcant}
		\begin{array}{lcll}
			\mathcal{P}_\beta : & \mathbb{R} \times {\cal C}  & \longrightarrow & \mathbb{R}^3\\
			& (x,t)          & \longmapsto     & (x, \xi_1(t), \xi_2(t) + \beta |x|),
		\end{array}
	\end{equation}
and the surface  
\begin{equation*}
		{\cal S}_\beta := \mathcal{L}_\beta (\mathbb{R} \times {\cal C}).
	\end{equation*} 
Similarly to the surface $\cal S$, geometrically, ${\cal S}_\beta$ is obtained by translating the closed curve $\xi$ along the curve $r_\beta(x)$. Furthermore,  note that ${\cal S}_\beta$ is symmetric with respect to the $\{e_2,e_3\}$ plane, and it has a ``corner'' lying in this plane, as shown in the Figure \ref{fig:surf_beta}. 

\begin{figure}[ht]
	\centering
	\includegraphics[width=0.7\textwidth]{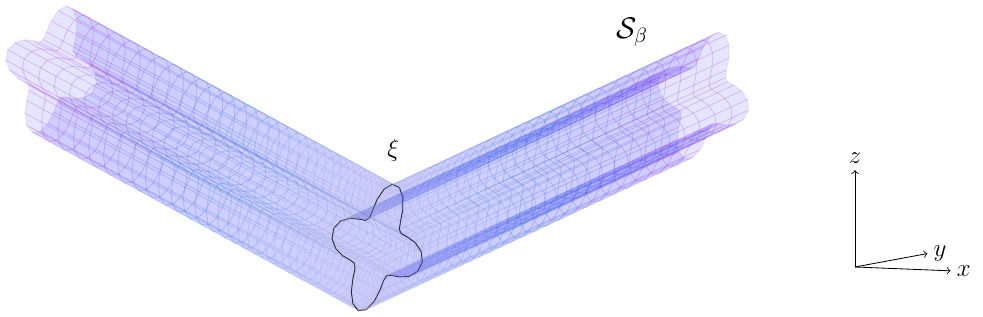}
	\caption{Broken sheared waveguide shaped surfaces, with $\beta=0.5$.}
	\label{fig:surf_beta}
\end{figure}

	Denote by  $-\Delta_{{\cal S}_\beta}$ the Laplace operator in ${\cal S}_\beta$, i.e., the self-adjoint operator associated with the quadratic form 
	\begin{equation*}
Q_{{\cal S}_\beta} (\psi) 
		= \int_{{\cal S}_\beta} |\nabla \psi|^2 {\rm d}{\bf x}, \quad \dom Q_{{\cal S}_\beta}
		= {\cal H}^1({\cal S}_\beta).
	\end{equation*}

In this case, consider the two-dimensional operator  
\begin{equation*}
    		T(\beta):=- \frac{\partial}{\partial t} \left[ \left(\frac{1+\beta^2}{h_\beta^2}\right) \partial_t \psi\right] + (1+\beta^2) \left[  \left(\frac{(h_\beta^2)'}{4 h_\beta^4} \right)'  + \left| \frac{(h_\beta^2)'}{4 h_\beta^3}\right|^2  \right],
	\end{equation*}
 \begin{equation*}
     \dom T(\beta)=\{v \in {\cal H}^1({\cal C}):  T(\beta) v \in L^2({\cal C}), \; v'(0)=v'(1)\}.
 \end{equation*}
where $h_\beta^2(t) = 1+ \beta^2 (\xi'_1(t))^2$. Denote $E_1(\beta)$ and $\chi_1$ to be the first eigenvalue and the corresponding normalized eigenfunction associated to $T(\beta)$, respectively.

\begin{Theorem}\label{propress_surfcant}
	For each $\beta \in (0, \infty)$, the essential spectrum of  $-\Delta_{{\cal S}_\beta}$ is the interval $[E_1 (\beta),\infty)$. 
\end{Theorem}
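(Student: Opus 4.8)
The plan is to use that, away from the corner $x=0$, the surface $\mathcal{S}_\beta$ consists of two straight sheared half-waveguides that both reduce to the transverse operator $T(\beta)$, and to absorb the corner into a compact region by a bracketing argument. From $\mathcal{P}_\beta(x,t)=(x,\xi_1(t),\xi_2(t)+\beta|x|)$ the induced metric has coefficients $g_{11}=1+\beta^2$, $g_{12}=\beta\,\mathrm{sgn}(x)\,\xi_2'(t)$, $g_{22}=1$, so $\det g=1+\beta^2(\xi_1')^2=h_\beta^2$ for a.e. $x$. Expressing $Q_{\mathcal{S}_\beta}$ in the coordinates $(x,t)$ and applying the straightening unitary $\psi\mapsto h_\beta^{1/2}\psi$ yields an operator whose transverse part is exactly $T(\beta)$, as in the proof of Theorem \ref{theo:surfess_surf}; crucially the sign of $g'=\pm\beta$ enters only through $h_\beta^2=1+\beta^2(\xi_1')^2$, which is insensitive to it. Extending the right arm $\{x>0\}$ to the full straight surface $g(x)=\beta x$ puts it under \eqref{eq:condlim_surf} with $(\beta_1,\beta_2)=(0,\beta)$, and the left arm under $(0,-\beta)$; in both cases $T_{\beta_1,\beta_2}=T(\beta)$, so by Theorem \ref{theo:surfess_surf} each extended arm has essential spectrum $[E_1(\beta),\infty)$.

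For the inclusion $[E_1(\beta),\infty)\subseteq\sigma_{ess}(-\Delta_{\mathcal{S}_\beta})$ I would build singular sequences localized far out on one arm. Since $h_\beta$ does not depend on $x$, the extended right arm is translation invariant and decomposes as a direct integral $\int^{\oplus}H_+(k)\,dk$ in the longitudinal momentum $k$, with fibre $H_+(k)$ acting on $L^2(\mathcal{C})$ whose lowest eigenvalue $\lambda_1(k)$ satisfies $\lambda_1(0)=E_1(\beta)$ (the fibre at $k=0$ is $T(\beta)$) and $\lambda_1(k)\to\infty$. Because the cross term contributes zero to the fibre Rayleigh quotient evaluated on the real ground state, $\lambda_1(k)\ge E_1(\beta)$, so $\lambda_1$ sweeps $[E_1(\beta),\infty)$ continuously. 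Fixing $\lambda=\lambda_1(k)$ and letting $v_k$ be the fibre ground state, I would take $\psi_n(x,t)=v_k(t)\,e^{ikx}\,\varphi_n(x)$ with $\varphi_n$ a slowly varying cutoff whose support escapes to $+\infty$; a standard estimate gives $\|(-\Delta_{\mathcal{S}_\beta}-\lambda)\psi_n\|/\|\psi_n\|\to 0$ and $\psi_n\rightharpoonup 0$. Since these functions are supported where $-\Delta_{\mathcal{S}_\beta}$ agrees with the extended arm, they are genuine singular sequences and the inclusion follows.

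For the reverse inclusion, i.e. $\inf\sigma_{ess}(-\Delta_{\mathcal{S}_\beta})\ge E_1(\beta)$, I would apply Neumann bracketing at $x=\pm a$, $a>0$: in the form sense $-\Delta_{\mathcal{S}_\beta}\ge H^N_{(-\infty,-a]}\oplus H^N_{[-a,a]}\oplus H^N_{[a,\infty)}$. The middle block lives on the compact piece $[-a,a]\times\mathcal{C}$, so by Rellich--Kondrachov it has compact resolvent and no essential spectrum; each outer block is a Neumann half-waveguide with $\inf\sigma_{ess}=E_1(\beta)$ by the arm analysis above. Since $A\ge B$ in form sense gives $\inf\sigma_{ess}(A)\ge\inf\sigma_{ess}(B)$, we obtain $\inf\sigma_{ess}(-\Delta_{\mathcal{S}_\beta})\ge E_1(\beta)$, which together with the previous paragraph yields $\sigma_{ess}(-\Delta_{\mathcal{S}_\beta})=[E_1(\beta),\infty)$.

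The main obstacle is the corner at $x=0$, where $g=\beta|x|$ fails to be differentiable and $g_{12}$ jumps sign. One must first confirm that $Q_{\mathcal{S}_\beta}$ remains closed and bounded below on $\mathcal{H}^1(\mathcal{S}_\beta)$; this holds because $g'\in L^\infty$ keeps all metric coefficients bounded and the corner is a null set. More delicate is ensuring the singularity is confined to the bounded block: placing the Neumann cuts at $x=\pm a$ with $a>0$ isolates it in $[-a,a]\times\mathcal{C}$, after which one must still verify compactness of the resolvent there despite the Lipschitz (non-$C^2$) boundary, and confirm that the Neumann half-arms genuinely have $\inf\sigma_{ess}=E_1(\beta)$. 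These are the technical points I expect to require the most care.
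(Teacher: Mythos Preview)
Your argument is correct but takes a different route from the paper. The paper's proof is a one--line reduction to the machinery of Section~\ref{sec3}: it simply invokes Propositions~\ref{prop:essfg_surf} and~\ref{prop:essfg2_surf} with $\beta_1=0$ and $\beta_2^2=\beta^2$, the point being that all the quantities appearing in the comparison argument ($h^2$, $(f')^2+(g')^2$, the transverse operator $T$) depend on $g'$ only through $(g')^2=\beta^2$, so the asymptotic operators at $x\to+\infty$ and $x\to-\infty$ coincide and the Weyl--sequence comparison of Proposition~\ref{prop:essfg_surf} goes through without change. No bracketing, no half--line analysis, no separate treatment of the corner is needed, because Lemma~\ref{lemma:tiafg_surf} already localizes singular sequences away from any bounded set.

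Your approach is sound as well, and is the classical alternative: singular sequences on one arm for the inclusion $[E_1(\beta),\infty)\subset\sigma_{ess}$, and Neumann bracketing for the reverse bound. What it buys is a very explicit isolation of the corner as a compact perturbation. What it costs is the extra step you flag at the end: you must check that each Neumann half--arm $H^N_{[a,\infty)}$ really has $\inf\sigma_{ess}=E_1(\beta)$. This is not literally ``the arm analysis above'' (your direct integral is on the full line), and the cross term $\beta\xi_2'\,\partial_t$ prevents a naive even reflection from preserving the form. You can close this gap by applying the same localized Weyl criterion (Lemma~\ref{lemma:tiafg_surf}) to the half--line operator, or by noting that adding a Neumann cut to the translation--invariant full--line operator is a relatively form--compact perturbation; either way it is routine. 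The paper avoids this detour entirely by never passing to half--line operators.
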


In order to analyze the existence of discrete eigenvalues for $-\Delta_{{\cal S}_\beta}$. Define the constants 
 \begin{align*}
   A &:= \int_{\cal C} \frac{|\chi_1|^2}{h_\beta^2} \dt \quad \mbox{and} \quad 
   B:= \int_{\cal C}   \left( \left(\frac{\xi'_2(t)}{h_\beta^2} \right)' + \frac{ \xi'_2(t) (h_\beta^2)'}{2 h_\beta^4} \right) |\chi_1|^2 \dt.
 \end{align*}

\begin{Theorem}\label{exidisspc_surfcant}
    Suppose that $B$ a nonzero constant. For each  $\beta \in (0, \infty)$, one has  $\inf \sigma (-\Delta_{{\cal S}_\beta}) < E_1(\beta)$, i.e., 
	$\sigma_{dis}(-\Delta_{{\cal S}_\beta}) \neq \emptyset$.
\end{Theorem}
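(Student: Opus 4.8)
The plan is to prove $\inf \sigma(-\Delta_{{\cal S}_\beta}) < E_1(\beta)$ by exhibiting a trial function $\psi$ in the form domain ${\cal H}^1({\cal S}_\beta)$ for which the Rayleigh quotient $Q_{{\cal S}_\beta}(\psi)/\|\psi\|^2$ falls strictly below the bottom of the essential spectrum $E_1(\beta)$ (Theorem \ref{propress_surfcant}). Since $\inf \sigma_{ess} = E_1(\beta)$, any such strictly subthreshold value forces a discrete eigenvalue below it. The natural ansatz is a product-type function that factors through the transverse profile $\chi_1$ of $T(\beta)$ modulated in the longitudinal variable $x$, namely something like $\psi(x,t) = \varphi(x)\chi_1(t)$ plus a carefully chosen correction term, where $\varphi$ is a symmetric longitudinal cutoff. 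The strategy is to compute the shifted quadratic form $Q_{{\cal S}_\beta}(\psi) - E_1(\beta)\|\psi\|^2$ in the pulled-back coordinates $(x,t)$ on $\mathbb{R}\times {\cal C}$, using the metric induced by $\mathcal{P}_\beta$, so that the spectral problem becomes one for an operator on the flat cylinder with variable coefficients.

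First I would write down the quadratic form in the intrinsic coordinates, isolating the transverse operator so that the leading contribution is exactly $E_1(\beta)\|\psi\|^2$ by the eigenvalue equation $T(\beta)\chi_1 = E_1(\beta)\chi_1$; the first variation in $\varphi$ then produces the longitudinal kinetic term $\int_{\mathbb{R}} |\varphi'(x)|^2\,A\,\dx$ together with a cross term linear in $\varphi'$ whose transverse integral is governed by the constant $B$. The key algebraic point is that, because $g(x)=\beta|x|$ has a derivative $g'(x)=\beta\,\mathrm{sgn}(x)$ that is constant in magnitude away from the corner, the coefficient functions $A$ and $B$ reduce to the stated $x$-independent constants on each half-line, and the only genuine $x$-dependence in the geometry is concentrated at the corner $x=0$. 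This is precisely why the broken waveguide case lies outside Theorems \ref{exidisspc_surf} and \ref{exidisspc_surf2}: the effective potential $V$ is not integrable, and the binding mechanism instead comes from the localized corner defect, which plays the role of an attractive perturbation much as in the broken-waveguide literature of \cite{belloverri}.

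The decisive step is to extract a negative first-order contribution from the cross term involving $B$. I expect to take the longitudinal test function to be even, so that $\varphi'$ is odd; since $g'=\beta\,\mathrm{sgn}(x)$ is also odd, the product $\varphi' \cdot \mathrm{sgn}(x)$ is even and does not integrate to zero, which lets the $B$-coupling survive integration over $\mathbb{R}$ and contribute a term of definite (negative) sign after an integration by parts across the corner. Concretely, one chooses a family $\varphi_\lambda$ degenerating to a constant as $\lambda \to 0$ (so the kinetic term $\int |\varphi_\lambda'|^2 A\,\dx$ is $O(\lambda)$ or of higher order in the small parameter) while arranging the $B$-term to be $O(\sqrt{\lambda})$ or otherwise dominant. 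Comparing orders, the $B$-term dominates the positive kinetic term for small parameter, yielding $Q_{{\cal S}_\beta}(\psi)-E_1(\beta)\|\psi\|^2 < 0$. The hypothesis that $B$ is a nonzero constant is exactly what guarantees this cross term is nonvanishing and of a fixed sign.

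The main obstacle will be the correct bookkeeping of the corner contribution: because $g'$ is discontinuous at $x=0$, the integration by parts that moves derivatives off $\mathrm{sgn}(x)$ produces a boundary/distributional term at the corner, and one must verify that the trial function genuinely lies in ${\cal H}^1({\cal S}_\beta)$ (continuity across the corner, matching the periodicity condition $v'(0)=v'(1)$ inherited through $\chi_1$) and that no spurious singular positive contributions from the metric at the corner overwhelm the negative $B$-term. Handling this rigorously amounts to splitting the integral over the two half-lines $x>0$ and $x<0$, tracking the jump in $g'$ carefully, and confirming by an expansion in the small longitudinal scale that the leading negative term indeed dominates; establishing the strict sign of this limit is where the technical care is concentrated.
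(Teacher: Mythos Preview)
Your plan coincides with the paper's in its essentials: pull back to $\Sigma=\mathbb R\times{\cal C}$, use the product ansatz $\varphi(x)\chi_1(t)$, and observe that after subtracting $E_1(\beta)\|\cdot\|^2$ the transverse eigenvalue cancels exactly, leaving
\[
\hat q_\beta(\varphi\chi_1)=A\int_{\mathbb R}|\varphi'|^2\,\dx+B\int_{\mathbb R} g'(x)\,\varphi\varphi'\,\dx ,
\]
with the second integral fed by the corner through $g'(x)=\beta\,\mathrm{sgn}(x)$. Your spreading family $\varphi_\lambda$ is exactly the paper's sequence $w_n(x)=w(x/n)$. One small correction to your bookkeeping: since $\int_{\mathbb R} g'\,\varphi\varphi'\,\dx=\tfrac12\int_{\mathbb R} g'\,(\varphi^2)'\,\dx=-\beta\,\varphi(0)^2$ for compactly supported $\varphi$, the $B$-term is $O(1)$ rather than $O(\sqrt\lambda)$; this only strengthens the comparison you intend.

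The genuine gap is the sign. You assert the corner contribution is ``of definite (negative) sign,'' but the computation above gives $-\beta B\,\varphi(0)^2$, and the hypothesis is only $B\ne 0$. If $B<0$ then $\hat q_\beta(\varphi\chi_1)=A\int|\varphi'|^2-\beta B\,\varphi(0)^2\ge 0$ for every compactly supported $\varphi$, and no scaling of a single product function can push it below zero. The paper handles this with an additional perturbation step absent from your outline: it sets $\psi_{n,\varepsilon}=w_n\chi_1+\varepsilon\,\eta\chi_1$ with a fixed $\eta\in C_0^\infty(-1,1)$, $\eta(0)\ne 0$, computes the off-diagonal limit
\[
\hat q_\beta(w_n\chi_1,\eta\chi_1)\ \longrightarrow\ \tfrac{B}{2}\int_{\mathbb R}\eta'\,g'\,\dx=-\beta B\,\eta(0)\ne 0,
\]
and then, after arguing that the diagonal term $\hat q_\beta(w_n\chi_1)$ is under control, chooses the \emph{sign} of the small parameter $\varepsilon$ so that the resulting quadratic in $\varepsilon$ is negative for $n$ large. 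This free-sign perturbation is the structural ingredient your proposal is missing, and it is precisely the device intended to cover both possible signs of $B$.
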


The proof of Theorems  \ref{propress_surfcant} and \ref{exidisspc_surfcant} are presented in Section \ref{sec5}.

\begin{Remark}\label{remidn1}{\rm 
Let $f(x)$ be a null function and $g(x)$ an even function, so that $\lim_{x \to \infty} g'(x)= \beta$. In this scenario, the waveguide shaped surface $\cal S$ would possess symmetry relative to the $\{e_2, e_3\}$ plane. Thus, following the ideas from the proofs in the Section \ref{sec5}, we can show the existence of elements in the discrete spectrum of $-\Delta_{\cal S}$.}
	\end{Remark}

\begin{Remark}\label{remidn2}{\rm 
   The model considered in this work was inspired by \cite{ohv} and \cite{verri}. In  \cite{ohv} the authors study $3$-dimensional waveguide surfaces defined via curves in $\mathbb{R}^3$ possessing a Frenet frame, on the other hand, in \cite{verri}, the author considers a waveguide constructed through parallel transport of a cross-section along a space curve.}
	\end{Remark}

The present work is organized as follows. In Section \ref{sec2} we describe the usual changes of coordinates to work in the Hilbert space $L^2(\Sigma)$ with the usual metric. In Section \ref{sec3} we study the essential spectrum of  $-\Delta_{\cal S}$ as well as the proof of Theorem \ref{theo:surfess_surf}. Next, Section \ref{sec4} is dedicated to study the discrete spectrum of $-\Delta_{\cal S}$. Where, we present the proofs of Theorems  \ref{exidisspc_surf} and  \ref{exidisspc_surf2}. Finally, Section \ref{sec5} concerns the spectral analysis of the Laplace operator in a broken sheared waveguide shaped surfaces, in turns, containing the proofs of Theorems \ref{propress_surfcant} and \ref{exidisspc_surfcant}.

\section{Change of coordinates} \label{sec2}
		
Recall that  ${\cal S} = {\cal P}(\Sigma)$, where $\Sigma := \mathbb{R \times {\cal C}}$. Then, in this section we perform a change of coordinates such that $Q_{\cal S}(\psi)$ starts to act in the Hilbert space $L^2(\Sigma)$ instead of $L^2(\cal S)$.

Let $G=(G_{ij})$ be the metric induced by   ${\cal P}$, i.e.,
	\[G_{ij} = \langle {\cal G}_i, {\cal G}_j \rangle = G_{ji}, \quad i,j=1,2,\]
	where
	\[{\cal G}_1 = \frac{\partial {\cal P}}{\partial x}= (1, f'(x), g'(x)), \quad 
	{\cal G}_2 = \frac{\partial {\cal P}}{\partial t}=(0, \xi'_1(t), \xi'_2(t)).\]
	More precisely,
	\[G = \nabla {\cal P} \cdot (\nabla {\cal P})^t = \left(
	\begin{array}{cc}
		1 + (f'(x))^2+(g'(x))^2  & f'(x)\xi'_1(t) +  g'(x)\xi'_2(t)   \\
	f'(x)\xi'_1(t) +  g'(x)\xi'_2(t)    &  1
	\end{array} \right),\]
and $\det G = 1+ (f'(x)\xi'_2(t)-g'(x)\xi'_1(t))^2 \neq 0$, for all $(x,t) \in \Sigma$. Since ${\cal P}$ is a global diffeomorphism between $\Sigma$ and ${\cal S}$, a change of variables can be performed. 

Denote by $h(x,t) :=  h_{f',g'}(x,t) =\sqrt{\det G}$. The norm in the Hilbert space  $L^2(\Sigma, h(x,t) \dx \dt)$ is given by
\begin{equation*}
    \|\psi\|^2 := \int_{\Sigma} |\psi|^2 h \dx \dt.  
\end{equation*}

Now, we consider the unitary operator  
\begin{equation*}
		\begin{array}{llll}
			{\cal U}: &   L^2({\cal S})  &  \to &  L^2(\Sigma, h \,\dx \dt) \\
			&    \psi   &  \mapsto  &        \psi \circ {\cal P}
		\end{array},
	\end{equation*}
 and, we define
	\begin{align*}
		Q_{f',g'}(\psi)  := & \; Q_{\cal S}({\cal U}^{-1} \psi)  
		= \int_{\Sigma} \langle \nabla \psi, G^{-1} \nabla \psi \rangle \sqrt{{\rm det}\, G} \, \dx \dt  \nonumber \\ 
		= &  \int_{\Sigma}  \text{\small $\left(\left|\psi' - (f'(x) \xi'_1(t) + g'(x) \xi'_2(t)) \frac{\partial \psi}{\partial t} \right|^2 + (1 + (f'(x) \xi'_2(t) - g'(x) \xi'_1(t))^2) \left| \frac{\partial \psi}{\partial t} \right|^2 \right) \frac{\dx \dt}{h (x,t)}$} \nonumber \\
  	= &  \int_{\Sigma} \frac{\left|\psi' - (f'(x) \xi'_1(t) + g'(x) \xi'_2(t)) \partial_t \psi \right|^2}{h (x,t)} \dx \dt  + \int_{\Sigma} 
 h(x,t) \left| \partial_t \psi \right|^2 \dx \dt,
	\end{align*}
	\[\dom Q_{f',g'} := {\cal U}(\dom Q_{\cal S}),\]
	where
	$\psi' := \partial \psi / \partial x$. Since $f', g' \in L^\infty(\mathbb{R})$, one get $ \dom Q_{f',g'} = {\cal H}^1(\Sigma)$.  Denote by $H_{f',g'}$ the self-adjoint operator associated with the quadratic form
 $Q_{f',g'}(\psi)$. 

Considering an additional unitary operator 
\begin{equation*}
		\begin{array}{llll}
			\hat{{\cal U}}: &   L^2(\Sigma,h (x,t) \dx \dy)  &  \to &  L^2(\Sigma) \\
			&    \psi   &  \mapsto  &   \sqrt{h}     \psi \end{array},
	\end{equation*}
we define 	\begin{align*}
		\hat{Q}_{f',g'}(\psi)  := &   \;Q_{f',g'}(\hat{{\cal U}}^{-1} \psi)  
		= Q_{f',g'}(h^{-1/2} \psi)  \\ 
	= & \int_{\Sigma} \frac{\left|(h^{-1/2} \psi)' - (f' \xi'_1+ g' \xi'_2) \partial_t (h^{-1/2} \psi) \right|^2}{h} \dx \dt  + \int_{\Sigma} 
 h \left| \partial_t (h^{-1/2} \psi) \right|^2 \dx \dt,
	\end{align*}
	\[\dom \hat{Q}_{f',g'}:= \dom Q_{f',g'},\]
	where 
	$(h^{-1/2} \psi)' := \partial (h^{-1/2} \psi) / \partial x$.
 Note that
\begin{align*}
    (h^{-1/2} \psi)' &=   ((1+ (f'(x)\xi'_2(t)-g'(x)\xi'_1(t))^2)^{-1/4} \psi)' =  h^{-1/2} \left[  \psi' - \frac{\partial_x h^2}{4h^2}  \psi \right], 
\end{align*}
and
\begin{align*}
\partial_t(h_\beta^{-1/2} \psi) & =  \partial_t((1+ (f'(x)\xi'_2(t)-g'(x)\xi'_1(t))^2)^{-1/4} \psi) = h^{-1/2}  \left[ \partial_t \psi -\frac{\partial_t h^2}{4h^{2}} \psi \right].
\end{align*}

Then, 
\begin{align*}
    \hat{Q}_{f',g'}(\psi) 
 & =  \int_{\Sigma} \frac{1}{h^2} \left| \psi' - \frac{\partial_x h^2}{4h^2}  \psi  - (f' \xi'_1+ g' \xi'_2)\left[\partial_t \psi - \frac{ \partial_t h^2}{4 h^2} \psi  \right] \right|^2 \dx \dt + \int_{\Sigma} 
 \left|\partial_t \psi - \frac{ \partial_t h^2}{4 h^2} \psi  \right|^2 \dx \dt. 
\end{align*}

 Since, $f', g' \in L^\infty(\mathbb{R})$, one has $ \dom \hat{Q}_{f',g'} = {\cal H}^1(\Sigma)$.   Denote by $\hat{H}_{f',g'}$ the self-adjoint operator associated with the quadratic form $\hat{Q}_{f',g'}(\psi)$. Therefore, since $H_{f',g'}$ and $\hat{H}_{f',g'}$ are unitarily equivalent, we can identify  $-\Delta_{\cal S}$ with the self-adjoint operator $\hat{H}_{f',g'}$.

\section{Essential spectrum} \label{sec3}

This section concerns the proof of Theorem \ref{theo:surfess_surf}.  Inspired by \cite{briet}, we examine the situation in which $\beta_1$ and $\beta_2$ take finite values.  
 
	\begin{Lemma}\label{lemma:tiafg_surf}
	A real number $\lambda$ belongs to the essential spectrum of $\hat{H}_{f',g'}$ if, and only if, there exists a sequence $\{\psi_n\}_{n \in \mathbb{N}} \subset \dom \hat{Q}_{f',g'}$ such that the following three conditions hold:
 \begin{enumerate} [label=(\roman*)]
			\item $\|\psi_n\|_{L^2(\cal S)}=1$, $\forall n \in \mathbb{N}$;
			\item $(\hat{H}_{f',g'} -\lambda \Id) \psi_n \rightarrow 0$, as $n \rightarrow \infty$, in $(\dom \hat{H}_{f',g'})^{*}$,
      	\item $\operatorname{supp} \psi_n \subset {\cal S}\setminus (-n,n) \times {\cal C}$, $\forall n \in \mathbb{N}$.
		\end{enumerate}
	\end{Lemma}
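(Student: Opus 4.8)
The plan is to establish the two implications separately. The guiding idea is that this is a localised, form--level version of Weyl's criterion for the essential spectrum, the novelty being the support condition (iii), which I will produce in one direction and exploit in the other through the compact embedding ${\cal H}^1 \hookrightarrow\hookrightarrow L^2$ on bounded cylinders. I record at the outset the coercivity that makes everything run: since $f',g'\in L^\infty(\mathbb{R})$, the matrix $G$ and its inverse have uniformly bounded entries and $h=\sqrt{\det G}\ge 1$ is bounded above and below, so the principal part of $\hat{Q}_{f',g'}$ is uniformly elliptic and controls $\|\nabla\psi\|_{L^2(\Sigma)}^2$ modulo a bounded zeroth--order term.

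$(\Leftarrow)$ Suppose $\{\psi_n\}$ satisfies (i)--(iii). First, (iii) gives $\psi_n\rightharpoonup 0$ weakly in $L^2(\Sigma)$: for $\phi\in C_c^\infty(\Sigma)$ the supports are eventually disjoint, so $\la\psi_n,\phi\ra=0$ for $n$ large, and this extends to all of $L^2(\Sigma)$ by density and $\|\psi_n\|=1$. Now argue by contradiction, assuming $\lambda\notin\sigma_{ess}(\hat H_{f',g'})$; then $\lambda$ is either in the resolvent set or an isolated eigenvalue of finite multiplicity, with finite--rank spectral projection $P$. Reading (ii) as $\sup_{\|\phi\|_{\dom\hat H_{f',g'}}\le 1}|\la\psi_n,(\hat H_{f',g'}-\lambda)\phi\ra|\to 0$ and testing against $\phi=(\hat H_{f',g'}-\lambda)^{-1}(I-P)w$ for $w\in\img(I-P)$ shows $\|(I-P)\psi_n\|\to 0$, while $\|P\psi_n\|\to 0$ by weak convergence (finite--rank projections are compact). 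Hence $\|\psi_n\|\to 0$, contradicting (i); therefore $\lambda\in\sigma_{ess}(\hat H_{f',g'})$.

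$(\Rightarrow)$ Conversely, let $\lambda\in\sigma_{ess}(\hat H_{f',g'})$ and choose, by Weyl's criterion, an orthonormal sequence $\{u_n\}\subset\dom\hat H_{f',g'}$ with $(\hat H_{f',g'}-\lambda)u_n\to 0$ in $L^2(\Sigma)$; being orthonormal it is weakly null, and $\hat{Q}_{f',g'}(u_n)=\lambda+\la(\hat H_{f',g'}-\lambda)u_n,u_n\ra$ is bounded, so by the coercivity above $\{u_n\}$ is bounded in ${\cal H}^1(\Sigma)$. By Rellich--Kondrachov on each bounded cylinder $(-R,R)\times{\cal C}$ the restriction $u_n\to 0$ strongly in $L^2((-R,R)\times{\cal C})$ for every fixed $R$. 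I now truncate with smooth cut--offs $\chi_n=\chi_n(x)$ that vanish on $\{|x|\le a_n\}$, equal $1$ on $\{|x|\ge b_n\}$ and satisfy $\|\chi_n'\|_{L^\infty}\le C/(b_n-a_n)$, and I set $\psi_n:=\chi_n u_n/\|\chi_n u_n\|$. A diagonal argument lets me pick $a_n,b_n\to\infty$ with $b_n-a_n\to\infty$ yet slowly enough that $\|u_n\|_{L^2(\{|x|\le b_n\})}\to 0$; this forces $\|\chi_n u_n\|\to 1$ and, after passing to a subsequence, delivers (i) and (iii).

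The verification of (ii) for $\psi_n$ is the step I expect to be the crux. Pairing with $\phi\in\dom\hat H_{f',g'}$ and using that $\chi_n$ is real and $x$--dependent only, one finds $\la(\hat H_{f',g'}-\lambda)(\chi_n u_n),\phi\ra=\la\chi_n(\hat H_{f',g'}-\lambda)u_n,\phi\ra+\big[\hat{Q}_{f',g'}(\chi_n u_n,\phi)-\hat{Q}_{f',g'}(u_n,\chi_n\phi)\big]$, where the bracket is a commutator whose every summand carries a factor $\chi_n'$ (supported on $a_n\le|x|\le b_n$) multiplying $u_n$ or $\partial_x u_n$ and tested against $\phi$ or $\partial_x\phi$. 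The apparent difficulty is that $\partial_x u_n$ need not converge on the annulus; this is bypassed by choosing the annulus wide, so that $\|\chi_n'\|_{L^\infty}\le C/(b_n-a_n)\to 0$ while $\|\nabla u_n\|_{L^2}$, $\|u_n\|_{L^2}$ remain bounded and $\|\phi\|_{{\cal H}^1(\Sigma)}\le C\|\phi\|_{\dom\hat H_{f',g'}}$. Each commutator term is then $O((b_n-a_n)^{-1})\,\|\phi\|_{\dom\hat H_{f',g'}}$, and $\la\chi_n(\hat H_{f',g'}-\lambda)u_n,\phi\ra\to 0$ since $(\hat H_{f',g'}-\lambda)u_n\to 0$ in $L^2(\Sigma)\hookrightarrow(\dom\hat H_{f',g'})^*$. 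The bounded zeroth--order terms of $\hat{Q}_{f',g'}$ arising from the factors $\partial_t h^2/(4h^2)$ are controlled identically. Thus $(\hat H_{f',g'}-\lambda)\psi_n\to 0$ in $(\dom\hat H_{f',g'})^*$, which is (ii), and the proof is complete.
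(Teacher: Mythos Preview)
Your proof is correct and follows essentially the same route as the paper: both directions hinge on the form--level Weyl criterion, the cut--off localisation, and compactness on bounded cylinders to obtain the support condition (iii), with a commutator estimate to verify (ii) for the truncated sequence. The only cosmetic differences are that you start from an operator--domain singular sequence and invoke Rellich--Kondrachov directly (with a diagonal choice of widening annuli), whereas the paper starts from the form--domain criterion, uses compactness of $(1-\varphi_k)(\hat H_{f',g'}+\Id)^{-1}$, and integrates by parts in the commutator (picking up a harmless $\varphi_k''$ term that your form--level computation avoids).
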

	\begin{proof}
By the general Weyl criterion modified to quadratic forms, $\lambda \in \sigma_{ess}(\hat{H}_{f',g'})$ if, and only if, there exists a sequence $\{\upsilon_n\}_{n \in \mathbb{N}} \subset \dom \hat{Q}_{f',g'}$ such that $(i)$ and $(ii)$ hold but $(iii)$ is replaced by $(iii')$ $\upsilon_n \overset{w}{\rightarrow} 0$, as $n \rightarrow \infty$ in $L^2(\cal S)$

The sequence  $\{\psi_n\}_{n \in \mathbb{N}}$ satisfying $(i)$ and $(iii)$ clearly converges weakly to zero. Hence, one implication of the lemma is obvious. Conversely, let $\lambda \in \sigma_{ess}(\hat{H}_{f',g'})$. By the quadratic form criterion, there exists a sequence $\{\upsilon_n\}_{n \in \mathbb{N}} \subset \dom \hat{Q}_{f',g'}$ satisfying $(i)$, $(ii)$ and $(iii')$. Using this, we will construct a sequence $\{\psi_n\}_{n \in \mathbb{N}}$ satisfying $(i)$, $(ii)$ and $(iii)$. 

Note that in the condition $(ii)$,  $(\dom \hat{H}_{f',g'})^{*}$ denotes the dual space of $\dom (\hat{Q}_{f',g'})$. Furthermore, recall that condition $(ii)$ means that 
		\begin{equation}\label{dualf_surf}
			  \|(\hat{H}_{f',g'} - \lambda \Id) \psi_n\|_{-1} := \|(\hat{H}_{f',g'} - \lambda \Id) \psi_n\|_{(\dom \hat{Q}_{f',g'})^{*}}= \sup_{\phi \in \dom \hat{Q}_{f',g'} \setminus \{0\}} \frac{|\langle \phi, (\hat{H}_{f',g'} - \lambda \Id) \psi_n \rangle|}{\|\phi\|_1} \rightarrow 0, 
		\end{equation}
		as $n \rightarrow \infty $, where
		\begin{equation*}
			\|\phi\|_1^2 := \hat{Q}_{f',g'} (\phi) +\|\phi\|_{L^2(\cal S)}^{2}.
		\end{equation*}
		
By writing
\begin{equation}\label{eq:dualfg_surf}
    \upsilon_n = (\hat{H}_{f',g'} + \Id)^{-1} (\hat{H}_{f',g'} - \lambda \Id) \upsilon_n + (1+\lambda) (\hat{H}_{f',g'} + \Id)^{-1} \upsilon_n,
\end{equation}
and using  $(ii)$, we observe that the sequence  
 $\{\upsilon_n\}_{n \in \mathbb{N}}$ is bounded in $\dom \hat{Q}_{f',g'}$. 

  Let $\varphi \in C^\infty(\mathbb{R})$ be a real function such that  $\varphi(x) =0$, if $x \in [-1,1]$ and $\varphi(x)=1$,  if $x \in \mathbb{R}\setminus (-2,2)$. Define, for each $k \in \mathbb{N}$,
  \begin{equation*}
      \varphi_k(x) = \varphi\left(\frac{x}{k}\right);
  \end{equation*}
and we keep the same notation for the function  $\varphi_k \otimes \Id$ on $\mathbb{R} \times {\cal C}$, as well as,
for their derivatives $\varphi'$ and $\varphi''$. Note that $\operatorname{supp} \varphi_k \subset {\cal S} \setminus (-k,k) \times {\cal C}$.

For each $k \in \mathbb{N}$, the operator $(1-\varphi_k)(\hat{H}_{f',g'} + \Id)^{-1}$ is compact in $L^2(\cal S)$, hence, $\lim_{n \to \infty}( 1-\varphi_k)(\hat{H}_{f',g'} + \Id)^{-1} \upsilon_n =0$ in $L^2(\cal S)$. Then, there exists a subsequence  $\{\upsilon_{n_k}\}_{k\in \mathbb{N}}$ of $\{\upsilon_{n}\}_{n\in \mathbb{N}}$ such that $\lim_{k \to \infty} (1-\varphi_k)(\hat{H}_{f',g'} + \Id)^{-1} \upsilon_{n_k} =0$ em $L^2(\cal S)$. Consequently, by \eqref{eq:dualfg_surf} and $(ii)$  one has  $\lim_{k \to \infty}(1-\varphi_k)\upsilon_{n_k} = 0$ in $L^2(\cal S)$. Thus, we may assume $\|\varphi_k \upsilon_{n_k}\| \geq 1/2$, for each $k \in \mathbb{N}$, and define
  \begin{equation*}
      \psi_k := \frac{\varphi_k \upsilon_{n_k}}{\|\varphi_k \upsilon_{n_k}\|} \in \dom \hat{Q}_{f',g'}.
  \end{equation*}
  Note that $\psi_k$ satisfies $(i)$ and $(iii)$. It remains to verify $(ii)$. 
  
	Now, for every $\phi \in \dom \hat{Q}_{f',g'} \setminus \{0\}$, one has
		\begin{equation*}
				|\langle \phi, (\hat{H}_{f',g'} - \lambda \Id)\psi_k \rangle| = |\hat{Q}_{f',g'}(\phi,\psi_k) - \lambda \langle \phi,\psi_k \rangle_{L^2(\cal S)}|. 
		\end{equation*}
	Without loss of generality, assume $\|\varphi_k \upsilon_{n_k}\| =1$.	Integrating by parts, 

  	\begin{align*} 
\hat{Q}_{f',g'}(\phi,\psi_k )  = &  \int_{\Sigma} \frac{1}{h^2 }\left( \phi' - \frac{\partial_x h^2}{4h^2}  \phi - (f' \xi'_1+ g' \xi'_2)\left[\partial_t \phi - \frac{ \partial_t h^2}{4 h^2} \phi  \right] \right) \left( \psi_k' - \frac{\partial_x h^2}{4h^2}  \psi_k - (f' \xi'_1+ g' \xi'_2)\left[\partial_t \psi_k - \frac{ \partial_t h^2}{4 h^2} \psi_k  \right] \right)  \dx \dt\\
&+ \int_{\Sigma} 
 \left( \partial_t \phi - \frac{\partial_t h^2}{4 h^2} \phi  \right) \left( \partial_t \psi_k - \frac{\partial_t h^2}{4 h^2} \psi_k  \right) \dx \dt\\
 = & \; \hat{Q}_{f',g'}(\phi \varphi_k,\upsilon_{n_k}) + 2 \left\langle  \frac{1}{h} \left(\phi' - \frac{\partial_x h^2}{4h^2}  \phi - (f' \xi'_1+ g' \xi'_2)\left[\partial_t \phi - \frac{ \partial_t h^2}{4 h^2} \phi  \right]\right), \frac{1}{h}\varphi'_k \upsilon_{n_k} \right\rangle_{L^2(\Sigma)}\\
 & +  \left\langle  \frac{1}{h^2}  \phi, \varphi''_k \upsilon_{n_k} \right\rangle_{L^2(\Sigma)}  - \left \langle \left(
  \frac{\partial_x h^2}{2h^4} -(f'\xi'_1 + g'\xi'_2)\frac{\partial_t h^2}{2h^4} + \frac{(f'\xi''_1 + g'\xi''_2)}{h^2} \right)
 \phi, \varphi'_k \upsilon_{n_k} \right\rangle_{L^2(\Sigma)}, 
   \end{align*}
   and
  $ - \lambda \langle \phi,\psi_n \rangle_{L^2(\Sigma)}  = 
 - \lambda \langle \phi \varphi_k,\upsilon_{n_k} \rangle_{L^2(\Sigma)}$, for every test function $\varphi \in C_0^\infty(\Sigma)$. From (\ref{dualf_surf}), it follows that
{\small
\begin{align*}
     \sup_{\phi \in  C_0^\infty(\Sigma)\setminus \{0\}} \frac{|\hat{Q}_{f',g'}(\phi \varphi_k,\upsilon_{n_k})  - \lambda \langle \phi \varphi_k,\upsilon_{n_k} \rangle_{L^2(\Sigma)}|}{\|\phi\|_1} \leq &  \sup_{\substack{\phi \in   C_0^\infty(\Sigma) \setminus \{0\}\\ \phi \varphi_k \neq 0}} \frac{|\hat{Q}_{f',g'} (\phi \varphi_k,\upsilon_{n_k})  - \lambda \langle \phi \varphi_k,\upsilon_{n_k} \rangle_{L^2(\Sigma)}|}{\|\phi \varphi_k\|_1} \\
     = & \|(\hat{H}_{f',g'} - \lambda \Id) \upsilon_{n_k}\|_{-1} \to 0,
\end{align*}}
 as $k \to \infty$. At the same time, using the Cauchy-Schwarz inequality and the estimate 
 \begin{equation*}
     \left\|\frac{1}{h} \left( \phi' - \frac{\partial_x h^2}{4h^2}  \phi - (f' \xi'_1+ g' \xi'_2)\left[\partial_t \phi - \frac{ \partial_t h^2}{4 h^2} \phi  \right] \right) \right\|^2 \leq \hat{Q}_{f',g'} (\phi) \leq \|\phi\|_1^2, 
 \end{equation*}
 we get 
 \begin{equation*}
         \sup_{\phi \in   C_0^\infty(\Sigma) \setminus \{0\}} \frac{\left| 2 \left\langle  \frac{1}{h} \left(\phi' - \frac{\partial_x h^2}{4h^2}  \phi - (f' \xi'_1+ g' \xi'_2)\left[\partial_t \phi - \frac{ \partial_t h^2}{4 h^2} \phi  \right]\right), \frac{1}{h}\varphi'_k \upsilon_{n_k} \right\rangle_{L^2(\Sigma)} \right|}{\|\phi\|_1} \leq \|\varphi'_k\|_{\infty} \|\upsilon_{n_k}\| \frac{2}{\inf|h|},
 \end{equation*}
 where $\|\varphi'_k\|_{\infty}$ denotes the supremum norm of $\varphi'_k$. By the normalization $(i)$ and since $\|\varphi'_k\|_\infty = k^{-1}\|\varphi'\|_\infty$, we see that also this terms to zero as $k \to \infty$. Moreover, using the estimate $\|\phi\| \leq \|\phi\|_1$, and given that $\|\varphi''_k\|_\infty = k^{-2}\|\varphi''\|_\infty$, we conclude that
 \begin{equation*}
         \sup_{\phi \in   C_0^\infty(\Sigma) \setminus \{0\}} \frac{\left|  \left\langle  \frac{1}{h^2} \phi , \varphi''_k \upsilon_{n_k} \right\rangle_{L^2(\Sigma)} \right|}{\|\phi\|_1} \leq \|\varphi''_k\|_{\infty} \|\upsilon_{n_k}\| \frac{1}{\inf|h^2|} \to 0,
 \end{equation*}
and
\begin{equation*}
         \sup_{\phi \in  C_0^\infty(\Sigma) \setminus \{0\}} \frac{\left|  \left \langle \left(
  \frac{\partial_x h^2}{2h^4} -(f'\xi'_1 + g'\xi'_2)\frac{\partial_t h^2}{2h^4} + \frac{(f'\xi''_1 + g'\xi''_2)}{h^2} \right)
 \phi, \varphi'_k \upsilon_{n_k}  \right\rangle_{L^2(\Sigma)} \right|}{\|\phi\|_1} \leq \|\varphi'_k\|_{\infty} \|\upsilon_{n_k}\| K \to 0,
 \end{equation*}
as $k \to 0$, where in the last inequality  $K:= \sup\left|
  \frac{\partial_x h^2}{2h^4} -(f'\xi'_1 + g'\xi'_2)\frac{\partial_t h^2}{2h^4} + \frac{(f'\xi''_1 + g'\xi''_2)}{h^2} \right|$.

In summary, we have verified that $\|(\hat{H}_{f',g'} - \lambda \Id) \psi_k \|_{-1} \to 0$ as $k \to \infty$. Therefore, the sequence $\{\psi_k\}_{n \in \mathbb{N}}$ satisfies condition $(ii)$, which completes the proof.  
	\end{proof}

 Using the previous lemma, we immediately obtain the following result (which states that the essential spectrum of the Laplace operator is determined  by the surface's behaviour at infinity only)

\begin{Proposition}\label{prop:essfg_surf}
	Suppose that the conditions in  \eqref{eq:condlim_surf} hold. Then, $\sigma_{ess}(\hat{H}_{f',g'}) = \sigma_{ess}(\hat{H}_{\beta_1,\beta_2})$.
	\end{Proposition}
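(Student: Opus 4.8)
The plan is to prove the two inclusions $\sigma_{ess}(\hat{H}_{f',g'}) \subseteq \sigma_{ess}(\hat{H}_{\beta_1,\beta_2})$ and $\sigma_{ess}(\hat{H}_{\beta_1,\beta_2}) \subseteq \sigma_{ess}(\hat{H}_{f',g'})$ by a single symmetric argument: transferring the singular sequences furnished by Lemma \ref{lemma:tiafg_surf} from one operator to the other. I would first note that the constant pair $(\beta_1,\beta_2)$ trivially satisfies \eqref{eq:condlim_surf}, so Lemma \ref{lemma:tiafg_surf} applies verbatim to $\hat{H}_{\beta_1,\beta_2}$ as well. Moreover, since $f',g',\beta_1,\beta_2 \in L^\infty$ and $h,h_{\beta_1,\beta_2}$ are bounded away from $0$ and $\infty$, both forms $\hat{Q}_{f',g'}$ and $\hat{Q}_{\beta_1,\beta_2}$ are comparable to the $\mathcal{H}^1(\Sigma)$-norm; hence the graph norms $\|\cdot\|_1$ and their duals $\|\cdot\|_{-1}$ attached to the two forms are equivalent, and it suffices to show that a sequence satisfying $(i)$–$(iii)$ for one operator satisfies $(ii)$ for the other.

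So I would fix $\lambda \in \sigma_{ess}(\hat{H}_{f',g'})$ and take $\{\psi_n\}\subset \dom\hat{Q}_{f',g'}=\mathcal{H}^1(\Sigma)$ from Lemma \ref{lemma:tiafg_surf}, so that $(i)$, $(ii)$, $(iii)$ hold. Conditions $(i)$ and $(iii)$ are intrinsic to the sequence and transfer immediately; the only issue is $(ii)$ for $\hat{H}_{\beta_1,\beta_2}$. For any $\phi\in\dom\hat{Q}_{\beta_1,\beta_2}$ the defining relations of the form-associated operators give, after the $\lambda$-terms cancel,
\[
\langle\phi,(\hat{H}_{\beta_1,\beta_2}-\lambda\Id)\psi_n\rangle = \langle\phi,(\hat{H}_{f',g'}-\lambda\Id)\psi_n\rangle + \big(\hat{Q}_{\beta_1,\beta_2}-\hat{Q}_{f',g'}\big)(\phi,\psi_n).
\]
The first term is bounded by $\|\phi\|_1\,\|(\hat{H}_{f',g'}-\lambda\Id)\psi_n\|_{-1}\to 0$ by $(ii)$. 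Everything therefore reduces to the key estimate
\[
\sup_{\phi\neq 0}\frac{\big|(\hat{Q}_{\beta_1,\beta_2}-\hat{Q}_{f',g'})(\phi,\psi_n)\big|}{\|\phi\|_1}\longrightarrow 0,\qquad n\to\infty.
\]

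To establish this I would expand both forms and collect coefficient differences. Using \eqref{eq:condlim_surf} and the fact that $\xi\in C^2$ is fixed, the coefficients of $\hat{Q}_{f',g'}$ converge, uniformly in $t\in\mathcal{C}$, to the ($x$-independent) coefficients of $\hat{Q}_{\beta_1,\beta_2}$ as $|x|\to\infty$: namely $h^2\to h_{\beta_1,\beta_2}^2$, $f'\xi_1'+g'\xi_2'\to\beta_1\xi_1'+\beta_2\xi_2'$, $\partial_t h^2\to (h_{\beta_1,\beta_2}^2)'$, and $\partial_x h^2\to 0$. Since $\operatorname{supp}\psi_n\subset\{|x|>n\}$, only the tail values of these coefficients enter $(\hat{Q}_{\beta_1,\beta_2}-\hat{Q}_{f',g'})(\phi,\psi_n)$, where they are uniformly $o(1)$. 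The first-order difference terms are then controlled by Cauchy–Schwarz against $\hat{Q}(\phi)^{1/2}\le\|\phi\|_1$ and the bounded quantity $\|\psi_n\|_1$, exactly as in the proof of Lemma \ref{lemma:tiafg_surf}, while the zeroth-order (potential-type) terms are controlled by $\|\phi\|\le\|\phi\|_1$; each prefactor tends to $0$, giving the displayed limit. This shows $\lambda\in\sigma_{ess}(\hat{H}_{\beta_1,\beta_2})$, and interchanging the roles of $(f',g')$ and $(\beta_1,\beta_2)$ yields the reverse inclusion, hence equality.

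The main obstacle is precisely this estimate on the difference of the two forms, and within it the terms carrying $\partial_x h^2$, which are absent from $\hat{Q}_{\beta_1,\beta_2}$ because $h_{\beta_1,\beta_2}^2$ does not depend on $x$. These first-order-in-$x$ and potential-type contributions (the tail analogues of the $C$-, $D$- and $E$-integrands) involve the second derivatives of $f$ and $g$, so the delicate input is the uniform-in-$t$ vanishing of $\partial_x h^2$ as $|x|\to\infty$ — the asymptotic flatness encoded in \eqref{eq:condlim_surf} under the standing regularity of $f,g$. Once this uniform decay of the tail coefficients is secured, the remainder is the same localized, form-bounded bookkeeping already carried out in Lemma \ref{lemma:tiafg_surf}.
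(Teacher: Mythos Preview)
Your proposal is correct and follows essentially the same route as the paper: apply Lemma~\ref{lemma:tiafg_surf} to obtain a localized singular sequence for one operator, write the difference $\hat{Q}_{f',g'}(\phi,\psi_n)-\hat{Q}_{\beta_1,\beta_2}(\phi,\psi_n)$ as a sum of terms whose coefficients vanish uniformly on $\{|x|>n\}$ by \eqref{eq:condlim_surf}, and estimate each term by Cauchy--Schwarz against $\|\phi\|_1$ and the (form-)boundedness of $\{\psi_n\}$. The paper carries out exactly this expansion (your identity is its equation~\eqref{ident:2_surf}) and, like you, singles out the $\partial_x h^2$ contributions as the ones requiring the asymptotic decay; your flagging of this as the delicate point is apt, since it is precisely where the paper invokes $\|\partial_x h^2/4h^i\|_{L^\infty(\Sigma\setminus(-n,n)\times\mathcal{C})}\to 0$.
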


\begin{proof}
    Let $\lambda \in \sigma_{ess}(\hat{H}_{\beta_1,\beta_2})$. By Lemma \ref{lemma:tiafg_surf} , there exists a sequence $\{\psi_n\}_{n \in \mathbb{N}} \subset \dom \hat{H}_{\beta_1,\beta_2}$ satisfying the properties $(i)-(iii)$. By \eqref{eq:dualfg_surf}, for every $n \in \mathbb{N}$, we write
    \begin{equation*}
         \psi_n = (\hat{H}_{\beta_1, \beta_2} + \Id)^{-1} (\hat{H}_{\beta_1, \beta_2} - \lambda \Id) \psi_n + (1+\lambda) (\hat{H}_{\beta_1, \beta_2} + \Id)^{-1} \psi_n.
    \end{equation*}
    By $(ii)$ of Lemma \ref{lemma:tiafg_surf}, we can see that the sequence $\{\psi_n\}_{n \in \mathbb{N}}$ is bounded in  $\dom \hat{Q}_{\beta_1, \beta_2}$.

Let
  	\begin{align*} 
\hat{Q}_{f',g'}(\phi,\psi_n )  = &  \int_{\Sigma} \frac{1}{h^2 }\left( \phi' - \frac{\partial_x h^2}{4h^2}  \phi - (f' \xi'_1+ g' \xi'_2)\left[\partial_t \phi - \frac{ \partial_t h^2}{4 h^2} \phi  \right] \right) \left( \psi_n' - \frac{\partial_x h^2}{4h^2}  \psi_n - (f' \xi'_1+ g' \xi'_2)\left[\partial_t \psi_n - \frac{ \partial_t h^2}{4 h^2} \psi_n  \right] \right)  \dx \dt\\
&+ \int_{\Sigma} 
 \left( \partial_t \phi - \frac{\partial_t h^2}{4 h^2} \phi  \right) \left( \partial_t \psi_n - \frac{\partial_t h^2}{4 h^2} \psi_n  \right) \dx \dt,
 \end{align*}
 and
  	\begin{align*} 
\hat{Q}_{\beta_1,\beta_2}(\phi,\psi_n )  = &  \int_{\Sigma} \frac{1}{h_{\beta_1,\beta_2}^2 }\left( \phi' - (\beta_1 \xi'_1+ \beta_2 \xi'_2)\left[\partial_t \phi - \frac{ (h_{\beta_1,\beta_2}^2)'}{4 h_{\beta_1,\beta_2}^2} \phi  \right] \right) \left( \psi'_n  - (\beta_1 \xi'_1+ \beta_2 \xi'_2)\left[\partial_t \psi_n - \frac{  (h_{\beta_1,\beta_2}^2)'}{4 h_{\beta_1,\beta_2}^2} \psi_n  \right] \right)  \dx \dt\\
&+ \int_{\Sigma} 
 \left( \partial_t \phi - \frac{(h_{\beta_1,\beta_2}^2)'}{4 h_{\beta_1,\beta_2}^2} \phi  \right) \left( \partial_t \psi_n - \frac{ (h_{\beta_1,\beta_2}^2)'}{4 h_{\beta_1,\beta_2}^2} \psi_n  \right) \dx \dt.
 \end{align*}

Some calculations show that 
\begin{equation}\label{ident:2_surf}
    \begin{split}
  \hat{Q}_{f',g'}(\phi, \psi_n)   = & \;\hat{Q}_{\beta_1,\beta_2}(\phi, \psi_n)
    + \int_\Sigma \Bigg(\left[\frac{1}{h^2} - \frac{1}{h_{\beta_1, \beta_2}^2} \right] \phi' \psi'_n  + \left[\frac{1+(f')^2 +(g')^2}{h^2} - \frac{1+\beta_1^2 +\beta_2^2}{h_{\beta_1, \beta_2}^2}  \right] \partial_t \phi \partial_t  \psi_n\\
    & + \left[(f' \xi'_1+ g'\xi'_2)  \frac{\partial_t h^2}{4h^4} - (\beta_1 \xi'_1+ \beta_2\xi'_2)  \frac{ (h_{\beta_1, \beta_2}^2)'}{4h_{\beta_1, \beta_2}^4}   \right](\phi' \psi_n +\phi \psi'_n) \\
    & - \left[\frac{f' \xi'_1+ g'\xi'_2}{h^2} -\frac{\beta_1 \xi'_1+ \beta_2\xi'_2}{h_{\beta_1, \beta_2}^2}\right](\phi' \partial_t  \psi_n + \partial_t \phi \psi'_n) \\
    & - \left[(1+(f')^2 +(g')^2) \frac{\partial_t h^2}{4h^4} - (1+\beta_1^2 +\beta_2^2) \frac{( h_{\beta_1, \beta_2}^2)'}{4h_{\beta_1, \beta_2}^4} \right] (  \partial_t\phi \psi_n + \phi \partial_t \psi_n)  \\
    &   + \left[(1+(f')^2 +(g')^2) \left(\frac{\partial_t h^2}{4h^3} \right)^2 - (1+\beta_1^2 +\beta_2^2) \left(\frac{( h_{\beta_1, \beta_2}^2)'}{4h_{\beta_1, \beta_2}^3}\right)^2\right]  \phi \psi_n \Bigg) \dx \dt\\
    & - \left\langle  \frac{1}{h} \left(\phi' - \frac{\partial_x h^2}{4h^2}  \phi - (f' \xi'_1+ g' \xi'_2)\left[\partial_t \phi - \frac{ \partial_t h^2}{4 h^2} \phi  \right]\right),   \frac{\partial_x h^2}{4h^3} \psi_{n} \right\rangle_{L^2(\Sigma)}\\
    & - \left\langle \phi , \frac{\partial_x h^2}{4h^4} \left(\psi'_n - (f' \xi'_1+ g' \xi'_2)\left[\partial_t \psi_n - \frac{\partial_t h^2}{4 h^2} \psi_n  \right]\right)\right\rangle_{L^2(\Sigma)}.
\end{split}
\end{equation}
Since  $\{\psi_n\}_{n \in \mathbb{N}}$ is bounded in  $\dom \hat{Q}_{\beta_1, \beta_2}$,  $\|\phi\|_1^2 := \hat{Q}_{f',g'} (\phi) +\|\phi\|_{L^2(\cal S)}^{2}$, using the Cauchy-Schwarz inequality and the fact that $\| \partial_x h^2 / 4 h^i\|_{L^\infty(\Sigma \setminus (-n,n) \times {\cal C})} \to 0$,  as $n \to \infty$, para $i=3,4$, we obtain  
\begin{align*}
         & \sup_{\substack{\phi \in  \dom \hat{Q}_{f',g'} \\ \phi \neq 0}}    
          \frac{\left| \left\langle  \frac{1}{h} \left(\phi' - \frac{\partial_x h^2}{4h^2}  \phi - (f' \xi'_1+ g' \xi'_2)\left[\partial_t \phi - \frac{ \partial_t h^2}{4 h^2} \phi  \right]\right),  \frac{\partial_x h^2}{4h^3} \psi_{n} \right\rangle_{L^2(\Sigma)} \right|}{\|\phi\|_1} \leq  \left\| \frac{\partial_x h^2}{4h^3} \right\|_{L^\infty(\Sigma \setminus (-n, n) \times {\cal C})} \|\psi_{n}\| \to 0,
 \end{align*}
 and 
\begin{align*}
         & \sup_{\substack{\phi \in  \dom \hat{Q}_{f',g'} \\ \phi \neq 0}}    \frac{\left| \left\langle \phi , \frac{\partial_x h^2}{4h^4} \left(\psi'_n - (f' \xi'_1+ g' \xi'_2)\left[\partial_t \psi_n - \frac{\partial_t h^2}{4 h^2} \psi_n  \right]\right)\right\rangle_{L^2(\Sigma)} \right|}{\|\phi\|_1} \\
         & \qquad  \leq  \left\| \frac{\partial_x h^2}{4h^4} \right\|_{L^\infty(\Sigma \setminus (-n, n) \times {\cal C})} 
 \left( \|\psi'_n\| + \|(f' \xi'_1+ g' \xi'_2)\|_{L^\infty(\Sigma)} \left[\|\partial_t \psi_n\| +\left\| \frac{\partial_t h^2}{4 h^2}\right\|_{L^\infty(\Sigma)} \|\psi_n\|  \right] \right) \to 0,
 \end{align*}
 as $n \to \infty$. Moreover, observe that 
\begin{gather*}
\left\| \frac{1}{h^2} - \frac{1}{h_{\beta_1, \beta_2}^2} \right\|_{L^\infty(\Sigma \setminus (-n, n) \times {\cal C})}  \to 0, \quad \left\| \frac{1+(f')^2 +(g')^2}{h^2} - \frac{1+\beta_1^2 +\beta_2^2}{h_{\beta_1, \beta_2}^2} \right\|_{L^\infty(\Sigma \setminus (-n, n) \times {\cal C})}  \to 0, \\
\left\| (f' \xi'_1+ g'\xi'_2)  \frac{\partial_t h^2}{4h^4} - (\beta_1 \xi'_1+ \beta_2\xi'_2)  \frac{ (h_{\beta_1, \beta_2}^2)'}{4h_{\beta_1, \beta_2}^4} \right\|_{L^\infty(\Sigma \setminus (-n, n) \times {\cal C})}  \to 0, \\
\left\| \frac{f' \xi'_1+ g'\xi'_2}{h^2} -\frac{\beta_1 \xi'_1+ \beta_2\xi'_2}{h_{\beta_1, \beta_2}^2} \right\|_{L^\infty(\Sigma \setminus (-n, n) \times {\cal C})} \to 0, \\
\left\| (1+(f')^2 +(g')^2) \frac{\partial_t h^2}{4h^4} - (1+\beta_1^2 +\beta_2^2) \frac{( h_{\beta_1, \beta_2}^2)'}{4h_{\beta_1, \beta_2}^4} \right\|_{L^\infty(\Sigma \setminus (-n, n) \times {\cal C})}  \to 0,\\
\left\| (1+(f')^2 +(g')^2) \left(\frac{\partial_t h^2}{4h^3} \right)^2 - (1+\beta_1^2 +\beta_2^2) \left(\frac{( h_{\beta_1, \beta_2}^2)'}{4h_{\beta_1, \beta_2}^3}\right)^2\right\|_{L^\infty(\Sigma \setminus (-n, n) \times {\cal C})}  \to 0,
 \end{gather*}
as $n \to \infty$. Due to the fact that $\operatorname{supp}  \psi'_n, \, \operatorname{supp}  \partial_t \psi_n \subseteq \operatorname{supp}  \psi_n$ and $\|(\hat{H}_{\beta_1, \beta_2} - \lambda \Id) \psi_n\|_{-1} \to 0$, as $n \to \infty$,  by $(ii)$ of Lemma \ref{lemma:tiafg_surf}, it follows from \eqref{ident:2_surf} and \eqref{dualf_surf} that $\|(\hat{H}_{f',g'} - \lambda \Id) \psi_n\|_{-1} \to 0$, as $n \to \infty$. Therefore, $\lambda \in \sigma_{ess}(\hat{H}_{f',g'})$. Analogously, we obtain the inclusion   $\sigma_{ess}(\hat{H}_{f',g'}) \subset \sigma_{ess}(\hat{H}_{\beta_1,\beta_2})$.
\end{proof}

In view of Proposition \ref{prop:essfg_surf}, we are going to the study the essential spectrum of the operator $\hat{H}_{\beta_1, \beta_2}$ instead of $\hat{H}_{f', g'}$. It remains to determine the essential spectrum of the operator  $\hat{H}_{\beta_1, \beta_2}$. 

\begin{Proposition} \label{prop:essfg2_surf}
		The spectrum of the operator $\hat{H}_{\beta_1, \beta_2}$ is purely essential and equals the interval  $[E_1(0), \infty)$.
	\end{Proposition}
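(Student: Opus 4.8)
The plan is to exploit that, once $f'\equiv\beta_1$ and $g'\equiv\beta_2$ are constant, the coefficients of $\hat Q_{\beta_1,\beta_2}$ are independent of $x$ (indeed $h_{\beta_1,\beta_2}=h_{\beta_1,\beta_2}(t)$ and $\partial_x h_{\beta_1,\beta_2}^2=0$), so $\hat H_{\beta_1,\beta_2}$ commutes with translations in $x$. Writing $h:=h_{\beta_1,\beta_2}$ and $a:=\beta_1\xi_1'+\beta_2\xi_2'$, I would apply the Fourier transform $\mathcal F$ in the variable $x$ to obtain the direct integral
\[
\mathcal F\,\hat H_{\beta_1,\beta_2}\,\mathcal F^{-1}=\int_{\mathbb{R}}^{\oplus}T_{\beta_1,\beta_2}(p)\,{\rm d}p
\]
on $L^2(\mathbb{R},L^2({\cal C}))$, whose fibre $T_{\beta_1,\beta_2}(p)$ is the self-adjoint operator on $L^2({\cal C})$ generated by the form
\[
q_p(v)=\int_{\cal C}\frac{1}{h^2}\Big|\,{\rm i}\,p\,v-a\big(v'-\tfrac{(h^2)'}{4h^2}v\big)\Big|^2\dt+\int_{\cal C}\Big|v'-\tfrac{(h^2)'}{4h^2}v\Big|^2\dt,\qquad \dom q_p={\cal H}^1({\cal C}).
\]
At $p=0$ this is exactly the operator $T_{\beta_1,\beta_2}$ of \eqref{fourpp:surf_surf}. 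Since ${\cal C}$ is compact, every $T_{\beta_1,\beta_2}(p)$ has compact resolvent, hence purely discrete spectrum $E_1(p)\le E_2(p)\le\cdots\to\infty$; moreover $q_p$ depends (quadratically, hence analytically) on $p$, so $\{T_{\beta_1,\beta_2}(p)\}$ is a self-adjoint holomorphic family and each band $p\mapsto E_n(p)$ is real-analytic. By the standard description of the spectrum of a direct integral, $\sigma(\hat H_{\beta_1,\beta_2})=\overline{\bigcup_{n}E_n(\mathbb{R})}$, so everything reduces to the band functions.

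The decisive step is to show that the lowest band is minimized at $p=0$, namely $E_1(p)\ge E_1(0)$ for all $p$, and that $E_1(p)\to\infty$ as $|p|\to\infty$. Here I would use the algebraic identity, a consequence of $|\xi'|\equiv1$,
\[
a^2+(\beta_1\xi_2'-\beta_2\xi_1')^2=\beta_1^2+\beta_2^2,\qquad\text{so}\qquad a^2+h^2=1+\beta_1^2+\beta_2^2 .
\]
Substituting $v=h^{1/2}w$ (so that $v'-\tfrac{(h^2)'}{4h^2}v=h^{1/2}w'$ and $\|v\|_{L^2({\cal C})}^2=\int_{\cal C}h|w|^2\dt$) recasts the fibre form as
\[
q_p(v)=p^2\int_{\cal C}\frac{|w|^2}{h}\dt-2p\int_{\cal C}\frac{a}{h}\,{\rm Im}(\bar w\,w')\dt+(1+\beta_1^2+\beta_2^2)\int_{\cal C}\frac{|w'|^2}{h}\dt .
\]
The only delicate term is the one linear in $p$, which a priori could lower the ground-state energy below $E_1(0)$; I would tame it with the elementary bound $2|p|\,|a|\,|w|\,|w'|\le p^2|w|^2+a^2|w'|^2$, which after division by $h$ and integration yields
\[
q_p(v)\ge\int_{\cal C}\frac{(1+\beta_1^2+\beta_2^2-a^2)\,|w'|^2}{h}\dt=\int_{\cal C}h\,|w'|^2\dt\ge 0 .
\]
Equality holds when $p=0$ and $w$ is constant, i.e. $v\propto h^{1/2}$; a direct check gives $T_{\beta_1,\beta_2}\,h^{1/2}=0$, so $h^{1/2}$ is the ground state of $T_{\beta_1,\beta_2}(0)$ and $\inf q_0=E_1(0)$. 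Hence $E_1(p)\ge E_1(0)$, with equality at $p=0$. Replacing the estimate by $2|p|\,|a|\,|w|\,|w'|\le\delta p^2|w|^2+\delta^{-1}a^2|w'|^2$ with $\delta\in\big(\tfrac{\beta_1^2+\beta_2^2}{1+\beta_1^2+\beta_2^2},1\big)$ and using $a^2\le\beta_1^2+\beta_2^2$ gives $q_p(v)\ge(1-\delta)\,p^2\int_{\cal C}|w|^2/h\,\dt\ge(1-\delta)\|h\|_\infty^{-2}p^2\|v\|^2$, so $E_1(p)\to\infty$.

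Finally I would assemble the spectrum. For every $n$ and $p$ one has $E_n(p)\ge E_1(p)\ge E_1(0)$, whence $\sigma(\hat H_{\beta_1,\beta_2})\subseteq[E_1(0),\infty)$. Conversely $E_1$ is continuous, attains its minimum $E_1(0)$ at $p=0$ and tends to $+\infty$, so $E_1(\mathbb{R})=[E_1(0),\infty)$; together with the direct-integral description this forces $\sigma(\hat H_{\beta_1,\beta_2})=[E_1(0),\infty)$. That the spectrum is purely essential follows from the $x$-translation invariance: if $\lambda$ were an eigenvalue with eigenfunction $\psi\neq0$, then $\{\psi(\cdot-s,\cdot)\}_{s\in\mathbb{R}}$ would be an infinite-dimensional family of eigenfunctions for $\lambda$, so no point of $\sigma$ can be an isolated eigenvalue of finite multiplicity, i.e. $\sigma=\sigma_{ess}$. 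I expect the middle paragraph to be the crux: ruling out that the term linear in $p$ drags $E_1(p)$ below $E_1(0)$ is exactly where the geometry enters, through the identity $a^2+h^2=1+\beta_1^2+\beta_2^2$ and the resulting cancellation; the fibration and band-continuity arguments around it are routine.
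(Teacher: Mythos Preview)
Your argument is correct and, like the paper, begins with the Fourier transform in $x$ and the resulting direct-integral decomposition into fibre operators on $L^2({\cal C})$. The divergence is at the central step. The paper does not use any inequality on the band functions: instead it observes that the fibre operator can be written (after completing the square) with a magnetic-type shift $\partial_t-i\gamma_p(t)$, $\gamma_p=\tfrac{p\,a}{1+\beta_1^2+\beta_2^2}$, and removes this shift by the gauge transformation $v\mapsto e^{\,i\int_0^t\gamma_p}v$; this is a genuine unitary on $L^2({\cal C})$ precisely because $\xi$ is closed, so $\int_0^1 a=0$. The outcome is an \emph{exact} identity $E_n(p)=E_n(0)+\tfrac{p^2}{1+\beta_1^2+\beta_2^2}$ for every band, from which the interval $[E_1(0),\infty)$ drops out immediately. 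Your route---undoing $\hat{\cal U}$ via $v=h^{1/2}w$, invoking the identity $a^2+h^2=1+\beta_1^2+\beta_2^2$, and then applying Young's inequality to control the cross term---yields only the qualitative facts $E_1(p)\ge E_1(0)$ and $E_1(p)\to\infty$, after which you need continuity of $E_1$ to fill the interval. What your approach buys is robustness (it does not rely on the zero-flux condition that makes the gauge single-valued) and the explicit identification $E_1(0)=0$ with ground state $\chi\propto h_{\beta_1,\beta_2}^{1/2}$, which the paper does not state; what the paper's approach buys is the full dispersion relation for all bands with no analyticity or continuity machinery. One small remark: your final translation-invariance paragraph is more than necessary, since once $\sigma=[E_1(0),\infty)$ is established the absence of isolated points already forces $\sigma_{\mathrm{disc}}=\emptyset$.
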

\begin{proof}
    Let ${\cal F}_x : L^2(\Sigma) \longrightarrow L^2(\Sigma)$ be the partial Fourier transform in the longitudinal variable $x$. ${\cal F}_x$ is a unitary
operator and, for functions $\psi \in L^1(\Sigma)$, the explicit expression for this transformation is given by 
    \begin{equation*}
        ({\cal F}_x \psi)(p,t) = \frac{1}{\sqrt{2 \pi}} \int_{\mathbb{R}} e^{-ipx} \psi(x,t) \dx.
    \end{equation*}
    We consider the operator $\tilde{H}_{\beta_1,\beta_2} := {\cal F}_x \hat{H}_{\beta_1,\beta_2} {\cal F}_x^{-1}$ which admits a direct integral decomposition  
    \begin{equation*}
        \tilde{H}_{\beta_1,\beta_2} = \int_{\mathbb{R}}^{\oplus} \hat{H}_{\beta_1,\beta_2}(p) {\rm d}p,
    \end{equation*}
    where, for every $p \in \mathbb{R}$, in the distributional sense
    \begin{align*}
        \hat{H}_{\beta_1,\beta_2}(p)  =& 
        -\frac{1}{h_{\beta_1,\beta_2}^2} (ip)^2  + 2ip  \frac{(\beta_1 \xi'_1 + \beta_2 \xi'_2)}{h_{\beta_1,\beta_2}^2}\partial_t  - \frac{\partial}{\partial t} \left[ \left(\frac{1+\beta_1^2 + \beta_2^2}{h_{\beta_1,\beta_2}^2}\right) \partial_t \right] + (1+\beta_1^2 + \beta_2^2) \left[\left(\frac{(h_{\beta_1,\beta_2}^2)'}{4 h_{\beta_1,\beta_2}^4} \right)' + \left( \frac{(h_{\beta_1,\beta_2}^2)'}{4 h_{\beta_1,\beta_2}^3}\right)^2\right], 
    \end{align*}
    $\dom \hat{H}_{\beta_1, \beta_2}(p) = \{v \in {\cal H}^1({\cal C}): \hat{H}_{\beta_1, \beta_2}(p) v \in L^2({\cal C}), \; v'(0)=v'(1)\}$; if $p=0$ we obtain the operator $T_{\beta_1, \beta_2}$ given by \eqref{fourpp:surf_surf} in the Introduction. Since 
    \begin{equation}\label{uniesp_surf}
        \sigma(\hat{H}_{\beta_1, \beta_2}) = \bigcup_{p \in \mathbb{R}} \sigma(\hat{H}_{\beta_1, \beta_2}(p)),
    \end{equation}
the strategy is to study the family of operators $\{\hat{H}_{\beta_1, \beta_2}(p) : p \in \mathbb{R}\}$. By the compactness of the embedding  ${\cal H}^1 ({\cal C}) \hookrightarrow L^2({\cal C})$, each operator $\hat{H}_{\beta_1, \beta_2}(p)$ has a purely discrete spectrum. In particular, denote by   $ \{E_n(0)\}_{n \in \mathbb{N}}$  the non-decreasing sequence of the eigenvalues of  $\hat{H}_{\beta_1, \beta_2}(0)=T_{\beta_1, \beta_2}$. Note that 
 \begin{align*}
        \hat{H}_{\beta_1,\beta_2}(p)  =& 
        - \frac{1+\beta_1^2 + \beta_2^2}{h_{\beta_1,\beta_2}^2} \left( \partial_t - i \frac{p(\beta_1 \xi'_1 + \beta_2 \xi'_2)}{1+\beta_1^2 + \beta_2^2}\right)^2 - \frac{p^2 (\beta_1 \xi'_1 + \beta_2 \xi'_2)^2}{h_{\beta_1,\beta_2}^2 (1+\beta_1^2 + \beta_2^2)} + \frac{p^2}{h_{\beta_1,\beta_2}^2}\\
        & -  \left(\frac{1+\beta_1^2 + \beta_2^2}{h_{\beta_1,\beta_2}^2}\right)' \partial_t  + (1+\beta_1^2 + \beta_2^2) \left[\left(\frac{(h_{\beta_1,\beta_2}^2)'}{4 h_{\beta_1,\beta_2}^4} \right)' + \left( \frac{(h_{\beta_1,\beta_2}^2)'}{4 h_{\beta_1,\beta_2}^3}\right)^2\right]\\
        = & - \frac{1+\beta_1^2 + \beta_2^2}{h_{\beta_1,\beta_2}^2} \left( \partial_t - i \frac{p(\beta_1 \xi'_1 + \beta_2 \xi'_2)}{1+\beta_1^2 + \beta_2^2}\right)^2  + \frac{p^2}{1+\beta_1^2 + \beta_2^2}\\
        & -  \left(\frac{1+\beta_1^2 + \beta_2^2}{h_{\beta_1,\beta_2}^2}\right)' \partial_t  + (1+\beta_1^2 + \beta_2^2) \left[\left(\frac{(h_{\beta_1,\beta_2}^2)'}{4 h_{\beta_1,\beta_2}^4} \right)' + \left( \frac{(h_{\beta_1,\beta_2}^2)'}{4 h_{\beta_1,\beta_2}^3}\right)^2\right]\\
        = &  \; e^{i \gamma(p)} t  \left[ T_{\beta_1, \beta_2}    + \frac{p^2}{1+\beta_1^2 + \beta_2^2} \right] e^{-i \gamma(p)} t,
    \end{align*}
    where
    \begin{equation*}
        \gamma(p) := \frac{p(\beta_1 \xi'_1 + \beta_2 \xi'_2)}{1+\beta_1^2 + \beta_2^2}.  
    \end{equation*}
As a consequence, 
\begin{equation}\label{sigmap_surf}
    \sigma(\hat{H}_{\beta_1, \beta_2}(p)) = \left\{ E_n(0)    + \frac{p^2}{1+\beta_1^2 + \beta_2^2} \right\}_{n \in \mathbb{N}}.
\end{equation}
Therefore, by \eqref{uniesp_surf} and \eqref{sigmap_surf}, we obtain the desired result.

\end{proof}

{\bf Proof of Theorem \ref{theo:surfess_surf}:} 
 It remains to apply Propositions \ref{prop:essfg_surf} and \ref{prop:essfg2_surf}. \qed

	\section{Discrete spectrum}\label{sec4}

This section is dedicated to prove Theorems \ref{exidisspc_surf} and  \ref{exidisspc_surf2} stated in the introduction. 

At first, we fix some notations that we will be useful in this section. Let $Q$ be a closed and lower bounded sesquilinear form with domain  $\mathrm{dom}~ Q$ dense in a Hilbert space  $H$. Denote by $A$ the self-adjoint operator associated with $Q$. The Rayleigh quotients of $A$ can be defined by %
\begin{equation}\label{rayquoin}
	\lambda_j(A) = \inf_{\substack{G \subset \mathrm{dom}~ Q \\ \dim G=j}}   \sup_{\substack{ \psi \in G \\ \psi \neq 0}}  \frac{Q (\psi)}{ \|\psi \|^2_{H}}.
\end{equation}
Let $\mu = \inf \sigma_{ess} (A)$. The sequence $\{\lambda_j(A)\}_{j \in \mathbb{N}}$ is non-decreasing and satisfies: (i) if $\lambda_j(A) < \mu$, then it is a discrete eigenvalue of $A$; (ii) If $\lambda_j(A) \geq \mu$, then 
$\lambda_j(A) = \mu$; (iii) The $j$-th  eigenvalue of $A$ less than $\mu$ (it is exists) coincides with $\lambda_j(A)$.

\vspace{0.2cm}
	\noindent	
{\bf Proof of Theorem \ref{exidisspc_surf}:}
 	Consider the quadratic form  
	\begin{equation*}
		\hat{q}_{f',g'} (\psi) = \hat{Q}_{f',g'}(\psi) - E_1(0)\|\psi\|_{L^2(\Sigma)}^{2}, \quad \dom \hat{q}_{f',g'} = \dom \hat{Q}_{f',g'}.
	\end{equation*}
	According to (\ref{rayquoin}) and Theorem  \ref{theo:surfess_surf}, it is enough show that there exists a function  $\psi \in \dom \hat{q}_{f',g'} \setminus \{0\}$ such that $\hat{q}_{f',g'}(\psi) < 0$. 
	
	The first step is to find a sequence  $\{\psi_n\}_{n \in \mathbb{N}} \subset \dom \hat{q}_{f',g'}$ such that $\hat{q}_\beta (\psi_n) \to 0$ as $n \to \infty$. To that end, let $w \in C^\infty(\mathbb{R})$ be a real-valued function such that $w=1$ for $x \in [-1,1]$, and $w=0$ for $x \in \mathbb{R} \setminus [-2,2]$. For each every $n \in \mathbb{N}$, define
	\begin{equation*}
		w_n(x) = w \left(\frac{x}{n}\right) \quad \mbox{and} \quad \psi_n(x,t) = w_n(x) \chi (t),
	\end{equation*}  
	where $\chi$ denotes the normalized eigenfunction corresponding to the eigenvalue $E_1(0)$ of the two-dimensional operator $T_{\beta_1, \beta_2}$ given by \eqref{fourpp:surf_surf}.
 
 In particular, 
	\begin{equation}\label{estdericuoff1_surf}
		\int_{\mathbb{R}} |w'_n|^2 \dx = \frac{1}{n} \int_{\mathbb{R}} |w'|^2 \dx \to 0, \quad \mbox{as} \quad n \to \infty,
	\end{equation}
	and
 \begin{equation}\label{firsteigcross1_surf}
		\int_{\cal C} 
 (1+\beta_1^2 + \beta_2^2) \left( \frac{|\chi'|^2}{h_{\beta_1, \beta_2}^2}  +  \left[
  \left(\frac{ (h_{\beta_1, \beta_2}^2)'}{4 h_{\beta_1, \beta_2}^4} \right)' + 
 \left(\frac{ (h_{\beta_1, \beta_2}^2)'}{4 h_{\beta_1, \beta_2}^3}\right)^2 \right] |\chi|^2 \right)  \dt = E_1(0).
	\end{equation}
	By (\ref{firsteigcross1_surf}), one has 
	\begin{equation*}
		\begin{split}
	\hat{q}_{f',g'}(\psi_n) = &  \; \hat{Q}_{f',g'}(\psi_n) - E_1(0)\|\psi_n\|_{L^2(\Sigma)}^{2}\\
= & \int_{\Sigma}  \Bigg[  \frac{|w'_n|^2 |\chi|^2}{h^2} -  \frac{\partial_x h^2}{2h^4}  w_n w'_n |\chi|^2  - \frac{2(f' \xi'_1+ g' \xi'_2)}{h^2} \left(w'_n \chi w_n  \chi'  - \frac{ \partial_t h^2}{4 h^2} w_n w'_n |\chi|^2 \right)   \\
  & + \left (\frac{\partial_x h^2}{4h^3} \right)^2  |w_n|^2 |\chi|^2 + (f' \xi'_1+ g' \xi'_2) \left( \frac{ \partial_x h^2}{2 h^4} |w_n|^2 \chi \chi' -  \frac{ \partial_x h^2 \partial_t h^2}{8 h^6} |w_n|^2 |\chi|^2\right) \\
  & +  (1+(f')^2 + (g')^2) \left( \frac{|w_n|^2 |\chi'|^2}{h^2}  +  \left[
  \frac{\partial}{\partial t} \left(\frac{\partial_t h^2}{4 h^4} \right) + 
 \left(\frac{ \partial_t h^2}{4 h^3}\right)^2 \right] |w_n|^2 |\chi|^2 \right) \Bigg] \dx \dt
 \\
  & - \int_{\Sigma}  
 (1+\beta_1^2 + \beta_2^2) \left( \frac{|w_n |^2 |\chi'|^2}{h_{\beta_1, \beta_2}^2}  +  \left[ \left(\frac{(h_{\beta_1, \beta_2}^2)'}{4 h_{\beta_1, \beta_2}^4} \right)' + 
 \left(\frac{ (h_{\beta_1, \beta_2}^2)'}{4 h_{\beta_1, \beta_2}^3}\right)^2 \right] |w_n |^2 |\chi|^2 \right)   \dx\dt.
 \end{split}
	\end{equation*}
Note that 
\begin{gather*}
     \int_{\cal C}  \frac{2(f' \xi'_1+ g' \xi'_2)}{h^2} \chi  \chi'\dt 
     = -\int_{\cal C}  \frac{\partial}{\partial t}\left(\frac{f' \xi'_1+ g' \xi'_2}{h^2} \right) |\chi|^2 \dt,\\
    \int_{\cal C}  (f' \xi'_1+ g' \xi'_2)\frac{\partial_x h^2}{2h^2} \chi  \chi'\dt  = -\int_{\cal C}  \frac{\partial}{\partial t}\left( (f' \xi'_1+ g' \xi'_2)\frac{\partial_x h^2}{4h^2} \right) |\chi|^2 \dt.
\end{gather*}
      Then,
 	\begin{equation*}
		\begin{split}
	\hat{q}_{f',g'}(\psi_n) = & \int_{\mathbb{R}}   A(x)|w'_n|^2 \dx + \int_{\mathbb{R}} B(x)  w_n w'_n \dx \\
 & +  \int_{\mathbb{R}} \left( C(x) + (1+(f')^2 + (g')^2) D(x) -  
 (1+\beta_1^2 + \beta_2^2)  E \right) |w_n |^2   \dx \\
  \leq &  \int_{\mathbb{R}}   A(x)|w'_n|^2 \dx + \left(\int_{\mathbb{R}} |B(x)|^2|w_n|^2 \dx \right)^{1/2} \left(  \int_{\mathbb{R}} |w^{\prime}_n|^2 \dx \right)^{1/2}  +  \int_{\mathbb{R}} V(x)|w_n |^2 \dx.
  \end{split}
	\end{equation*}
 By (\ref{estdericuoff1_surf}) and since $\|w_n\|_\infty \leq 1$ and $\int_\mathbb{R} V(x) \dx <0$,  by the dominated convergence theorem, it follows that
\begin{equation*}
    \hat{q}_{f',g'}(\psi_n) \to \int_{\mathbb{R}} V(x) \dx, \quad \mbox{as} \quad n \to \infty. 
\end{equation*}
Thus, there exists $N \in \mathbb{N}$ such that $\hat{q}_{f',g'}(\psi_N)<0$.  \qed

\vspace{0.2cm}
	\noindent
{\bf Proof of Theorem \ref{exidisspc_surf2}:}

  For every $n \in \mathbb{R}$, consider the functions $w_n$ and $\psi_n$ as defined in the proof of Theorem \ref{exidisspc_surf}.  However, we now introduce a small perturbation and define
	\begin{equation*}
		\psi_{n,\varepsilon}(x,t) := \psi_n(x,t) + \varepsilon \phi(x,t),  
	\end{equation*}
where $\varepsilon \in \mathbb{R}$ and $\phi \in \dom \hat{q}_\beta$. In this case,
	\begin{equation*}
		\hat{q}_{f',g'}(\psi_{n,\varepsilon}) = \hat{q}_{f',g'}(\psi_n) + 2 \varepsilon \,  {\rm Re} \, (\hat{q}_{f',g'}(\psi_n, \phi))
		+ \varepsilon^2 \hat{q}_{f',g'}(\phi).
	\end{equation*}
	
	The strategy is to show that there exists $\phi$ satisfying 
	\begin{equation}\label{rbetaneez:surf}
		\lim_{n \to \infty} \hat{q}_{f',g'}(\psi_n, \phi) \neq 0.
	\end{equation}	
	Indeed, if (\ref{rbetaneez:surf}) holds true, it is enough to chosen $\varepsilon$ such that $\hat{q}_{f',g'}(\psi_{n,\varepsilon})<0$, for some $n$ large enough. 
     
	Consider $\eta \in C^\infty_0 (\mathbb{R})$, with $\operatorname{supp}  \eta \subset (-1,1)$.  Define $\phi(x,t) = \eta(x) \chi(t)$. We get 
\begin{align*}
    \hat{q}_{f',g'}(\psi_n, \phi )  = &  \int_{\Sigma} \frac{1}{h^2 } \left( \psi_n' - \frac{\partial_x h^2}{4h^2}  \psi_n - (f' \xi'_1+ g' \xi'_2)\left[\partial_t \psi_n - \frac{ \partial_t h^2}{4 h^2} \psi_n  \right] \right) \left( \phi' - \frac{\partial_x h^2}{4h^2}  \phi - (f' \xi'_1+ g' \xi'_2)\left[\partial_t \phi - \frac{ \partial_t h^2}{4 h^2} \phi  \right] \right)  \dx \dt\\
&+ \int_{\Sigma} 
  \left( \partial_t \psi_n - \frac{\partial_t h^2}{4 h^2} \psi_n  \right) \left( \partial_t \phi - \frac{\partial_t h^2}{4 h^2} \phi  \right) \dx \dt - E_1(0) \int_{\Sigma} \psi_n \phi \, \dx \dt\\
   = & \int_{\Sigma} \frac{1}{h^2 } \Bigg(  w'_n \eta' |\chi|^2  - \frac{\partial_x h^2}{4h^2}  w'_n \eta |\chi|^2 - (f' \xi'_1+ g' \xi'_2) w'_n \chi \left[\eta \chi' - \frac{ \partial_t h^2}{4 h^2} \eta \chi  \right]  \\
  & -  \frac{\partial_x h^2}{4h^2}  w_n  \eta' |\chi|^2 + \frac{\partial_x h^2}{4h^2}  \frac{\partial_x h^2}{4h^2}  w_n \eta |\chi|^2  + (f' \xi'_1+ g' \xi'_2)  \frac{\partial_x h^2}{4h^2}  w_n \chi \left[\eta \chi' - \frac{ \partial_t h^2}{4 h^2} \eta \chi  \right] \\
  & - (f' \xi'_1+ g' \xi'_2)\left[ w_n  \chi' \eta' \chi - \frac{ \partial_t h^2}{4 h^2} w_n  \eta' |\chi|^2 - \frac{ \partial_x h^2}{4 h^2} w_n \chi' \eta \chi + \frac{\partial_x h^2}{4h^2} \frac{ \partial_t h^2}{4 h^2} w_n \eta |\chi|^2 \right] \Bigg) \dx \dt\\
&+ \int_{\Sigma} \frac{(1+(f')^2 + (g')^2)}{h^2}
  \left( w_n \chi' - \frac{\partial_t h^2}{4 h^2} w_n \chi  \right) \left( \eta \chi' - \frac{\partial_t h^2}{4 h^2} \eta \chi   \right) \dx \dt - E_1(0) \int_{\Sigma} w_n  \eta |\chi|^2 \dx \dt\\
   \to &   \int_{\Sigma} \eta'\left[ -  \frac{\partial_x h^2}{4h^4}  + \frac{\partial}{\partial t}\left(\frac{f' \xi'_1+ g' \xi'_2}{2h^2} \right)  + (f' \xi'_1+ g' \xi'_2) \frac{ \partial_t h^2}{4 h^4}   \right] |\chi|^2 \dx \dt\\
& + \int_{\Sigma} \eta \left[\left(\frac{\partial_x h^2}{4h^3}\right)^2  - \frac{\partial}{\partial t}\left((f' \xi'_1+ g' \xi'_2)  \frac{\partial_x h^2}{4h^4} \right)    - (f' \xi'_1+ g' \xi'_2)  \frac{\partial_x h^2 \partial_t h^2}{8h^6} \right] |\chi|^2 \dx \dt  \\
&+ \int_{\Sigma} \frac{(1+(f')^2 + (g')^2)}{h^2} \eta
  \left( \chi' - \frac{\partial_t h^2}{4 h^2} \chi  \right)^2 \dx \dt - E(0) \int_{\Sigma} \eta |\chi|^2 \dx \dt\\
  = & \int_{\mathbb{R}} \eta' \frac{B(x)}{2} \dx  + \int_{\mathbb{R}} \eta V(x) \dx\\
  = & \int_{\mathbb{R}} \left(V(x)-\frac{B'(x)}{2}  \right) \dx,
\end{align*}
as $n \to \infty$. If the last integral vanishes for all $\eta \in C_0^\infty(\mathbb{R})$, we would have $ V(x) -B'(x)/2 = 0$ almost everywhere, which implies $B(x) = 2 \int_{\mathbb{R}} V(x) \dx + C = C$. This yields a contradiction since, by hypothesis, $B(x)$ is not constant. Therefore, there exists a function  $\phi$ satisfying \eqref{rbetaneez:surf}. \qed

\section{Broken sheared waveguides shaped surfaces} \label{sec5}

This section is devoted to the proof of Theorems \ref{propress_surfcant} and \ref{exidisspc_surfcant}. 

	Recall the mapping ${\cal P}_\beta$ given by \eqref{lmas_surfcant}. Namely, ${\cal S}_\beta = \mathcal{P}_\beta(\Sigma)$. Following the same change of coordinates approach as in Section \ref{sec3}, and since ${\cal P}_\beta$ is a global diffeomorphism between $\Sigma$ and ${\cal S}_\beta$, we have that
\begin{align*}
    \hat{Q}_{\beta }(\psi) 
 & =  \int_{\Sigma} \frac{1}{h_\beta^2} \left| \psi'  -  g'(x) \xi'_2(t) \left[\partial_t \psi - \frac{ (h_\beta^2)'}{4 h_\beta^2} \psi  \right] \right|^2 \dx \dt + \int_{\Sigma} 
 \left|\partial_t \psi - \frac{(h_\beta^2)'}{4 h_\beta^2} \psi  \right|^2 \dx \dt, \quad  \dom \hat{Q}_{\beta} = {\cal H}^1(\Sigma).  
\end{align*}
  Denote by $\hat{H}_{\beta}$ the self-adjoint operator associated with the quadratic form 
 $\hat{Q}_{\beta}(\psi)$. Therefore, since $-\Delta_{{\cal S}_\beta}$ and $\hat{H}_{\beta}$ are unitarily equivalent, we can identify $-\Delta_{{\cal S}_\beta}$ with the self-adjoint operator $\hat{H}_{\beta}$.

\vspace{0.2cm}
	\noindent
{\bf Proof of Theorem \ref{propress_surfcant}:} It remains to apply Theorems \ref{prop:essfg_surf} and  \ref{prop:essfg2_surf}  to the case where $\hat{H}_{\beta_1, \beta_2} = \hat{H}_{\beta}$, with $\beta_1^2 = 0$ and $\beta_2^2 = \beta$.   \qed

\vspace{0.2cm}
	\noindent
{\bf Proof of Theorem \ref{exidisspc_surfcant}:} 
	Consider the quadratic form  
	\begin{equation*}
		\hat{q}_\beta (\psi) = \hat{Q}_\beta(\psi) - E_1(\beta)\|\psi\|_{L^2({\cal S}^+)}^{2}, \quad \dom \hat{q}_\beta = \dom \hat{Q}_\beta.
	\end{equation*}
	According to (\ref{rayquoin}) and Theorem \ref{propress_surfcant}, it is enough to show that there exists a non null function $\psi \in \dom \hat{q}_\beta$ for which $\hat{q}_\beta(\psi) < 0$. A function with this property will be constructed below.
	
	The first step is to find a sequence $\{\psi_n\}_{n \in \mathbb{N}} \subset \dom q_\beta$ such that $q_\beta (\psi_n) \to 0$, as $n \to \infty$. For that,   let $w \in C^\infty(\mathbb{R})$ a real function where $w=1$ for $x \in [-1,1]$, and $w=0$ for $x \in \mathbb{R} \setminus [-2,2]$. Define, for each $n \in \mathbb{N}$,
	\begin{equation*}
		w_n(x) = w \left(\frac{x}{n}\right) \quad \mbox{and} \quad \psi_n(x,t) = w_n(x) \chi_1 (t),
	\end{equation*}  
	where $\chi_1$ denotes  a denotes the normalized eigenfunction corresponding to the eigenvalue $E_1(\beta)$ of the operator $T(\beta)$. In particular 
    \begin{equation*}
	\hat{q}_{\beta}(\psi_n) =  \hat{Q}_{\beta}(\psi_n) - E_1(\beta)\|\psi_n\|_{L^2(\Sigma)}^{2}
= \int_{\Sigma}  \Bigg[  \frac{|w'_n|^2 |\chi_1|^2}{h_\beta^2}  - \frac{2g'(x) \xi'_2(t)}{h_\beta^2} \left(w'_n \chi_1 w_n  \chi'_1  - \frac{ (h_\beta^2)'}{4 h_\beta^2} w_n w'_n |\chi_1|^2 \right) \Bigg] \dx \dt.
	\end{equation*}
Note that
\begin{equation*}
     \int_{\cal C}  \frac{ \xi'_2(t)}{h_\beta^2} \chi_1  \chi'_1\dt 
     = -\frac{1}{2}\int_{\cal C} \left(\frac{\xi'_2(t)}{h_\beta^2} \right)' |\chi_1|^2 \dt.
\end{equation*}
It can be verified that
 	\begin{equation*}
	\hat{q}_{\beta}(\psi_n) =  A\int_{\mathbb{R}}   |w'_n|^2 \dx + B\int_{\mathbb{R}} g'(x)  w_n w'_n \dx  \leq A \int_{\mathbb{R}}   |w'_n|^2 \dx + \beta^2 |B|\left(\int_{\mathbb{R}} |w_n|^2 \dx \right)^{1/2} \left(  \int_{\mathbb{R}} |w^{\prime}_n|^2 \dx \right)^{1/2}.
	\end{equation*}
 By (\ref{estdericuoff1_surf}) and since $\|w_n\|_\infty \leq 1$, for all $n \in \mathbb{N}$, by dominated convergence theorem, 
\begin{equation*}
    \hat{q}_\beta(\psi_n) \to 0, \quad \mbox{as} \quad n \to \infty. 
\end{equation*}

Now, fix $\varepsilon \in \mathbb{R}$.  For each $n \in \mathbb{N}$, define 
	\begin{equation*}
		\psi_{n,\varepsilon}(x,t) :=  \psi_n(x,t) + \varepsilon \phi(x,t).  
	\end{equation*}
	for some $\phi \in \dom \hat{q}_\beta$. In this case,
	\begin{equation*}
		\hat{q}_\beta(\psi_{n,\varepsilon}) = \hat{q}_\beta(\psi_n) + 2 \varepsilon \,  {\rm Re} \, (\hat{q}_\beta(\psi_n, \phi))
		+ \varepsilon^2 \hat{q}_\beta(\phi).
	\end{equation*}

The strategy is to show that there exists $\phi$ satisfying 
	\begin{equation}\label{rbetaneez_surfcant}
		\lim_{n \to \infty} \hat{q}_\beta(\psi_n, \phi) \neq 0.
	\end{equation}	
	In fact, if (\ref{rbetaneez_surfcant}) holds true, it is enough to chosen  $\varepsilon$ such that $\hat{q}_\beta(\psi_{n,\varepsilon})<0$, for some $n$ large enough. 
	
 	Consider $\eta \in C^\infty_0 (\mathbb{R})$, with $\operatorname{supp} \eta \subset (-1,1)$ and $\eta(0) \neq 0$.  Define $\phi(x,t) = \eta(x) \chi_1(t)$. One has 
\begin{align*}
    \hat{q}_{\beta}(\psi_n, \phi )  = &  \int_{\Sigma} \frac{1}{h_\beta^2 } \left( \psi_n' - g' \xi'_2 \left[\partial_t \psi_n - \frac{  (h_\beta^2)'}{4 h_\beta^2} \psi_n  \right] \right) \left( \phi' -  g' \xi'_2 \left[\partial_t \phi - \frac{ (h_\beta^2)'}{4 h_\beta^2} \phi  \right] \right)  \dx \dt\\
&+ \int_{\Sigma} 
  \left( \partial_t \psi_n - \frac{(h_\beta^2)'}{4 h^2} \psi_n  \right) \left( \partial_t \phi - \frac{(h_\beta^2)'}{4 h_\beta^2} \phi  \right) \dx \dt - E_1(\beta) \int_{\Sigma} \psi_n \phi \, \dx \dt\\
   = & \int_{\Sigma} \frac{1}{h_\beta^2 } \Bigg(  w'_n \eta' |\chi_1|^2  -  g' \xi'_2 w'_n \chi_1 \left[\eta \chi'_1 - \frac{(h_\beta^2)'}{4 h_\beta^2} \eta \chi_1  \right]  -  g' \xi'_2\left[ w_n  \chi'_1 \eta' \chi_1 - \frac{(h_\beta^2)'}{4 h_\beta^2} w_n  \eta' |\chi_1|^2 \right] \Bigg) \dx \dt\\
&+ \int_{\Sigma} \frac{(1 + \beta^2)}{h^2}
  \left( w_n \chi'_1 - \frac{(h_\beta^2)'}{4 h_\beta^2} w_n \chi_1  \right) \left( \eta \chi'_1 - \frac{(h_\beta^2)'}{4 h_\beta^2} \eta \chi_1   \right) \dx \dt - E_1(\beta) \int_{\Sigma} w_n  \eta |\chi_1|^2 \dx \dt\\
   \to &   \int_{\Sigma} \eta'  g'\left[ \left(\frac{ \xi'_2}{2h_\beta^2} \right)'  +   \frac{\xi'_2 (h_\beta^2)'}{4 h_\beta^4}   \right] |\chi_1|^2 \dx \dt + \int_{\Sigma} \frac{(1+ \beta^2)}{h^2} \eta
  \left( \chi'_1 - \frac{\partial_t h_\beta^2}{4 h_\beta^2} \chi_1  \right)^2 \dx \dt - E(\beta) \int_{\Sigma} \eta |\chi_1|^2 \dx \dt\\
  = & \frac{B}{2} \int_{\mathbb{R}} \eta' g'  \dx,
\end{align*}
as $n \to \infty$.  Finally,  since $\eta(0) \neq 0$, $B/2\neq 0$, 
\begin{equation*}
    \frac{B}{2} \int_{\mathbb{R}} \eta' g'  \dx = -\frac{\beta B}{2} \int_{-1}^0 \eta'  \dx + \frac{\beta B}{2} \int_{0}^1 \eta'  \dx = -\beta B \eta(0) \neq 0. 
\end{equation*}  
 Thus, \eqref{rbetaneez_surfcant} holds true. \qed

	
	\vspace{0.2cm}
	\noindent
	{\bf Acknowledgments}

	\vspace{0.2cm}
	\noindent

The author is grateful to Alessandra Verri for useful suggestions, discussions, very detailed and critical reading of the manuscript. The work has been supported by CAPES (Brazil) through the process: 88887.511866/2020-00.

\vspace{0.2cm}
	\noindent 
    
	\bibliographystyle{plain}

	\end{document}